\tikzset{
  box/.style={rectangle,draw=black, thick},
}
\newcommand*{\GetListMember}[2]{%
    \edef\dotheloop{%
    \noexpand\foreach \noexpand\a [count=\noexpand\i] in {#1} {%
        \noexpand\IfEq{\noexpand\i}{#2}{\noexpand\a\noexpand\breakforeach}{}%
    }}%
    \dotheloop
    \par%
}%
\newcommand{\clauseGadget}{
	\pgfmathsetmacro\radiusA{1.2}
	\pgfmathsetmacro\radiusB{1.8}
	\node[vertex] (c) at (0, 0) {};
    \node[vertex] (v{0}) at ($(c) + (150:\radiusA)$) {};
	\foreach \j in {1, ..., 8} {
    	\node[vertex] (v{\j}) at ($(c) + (150 - \j * 40:\radiusA)$) {};
       	\draw (c) to (v{\j});
		\draw (v{\number\numexpr \j - 1 \relax}) to (v{\j});
	}
    \draw (c) to (v{0});
	\draw (v{0}) to (v{8});
	\foreach \j in {0, 1, 2} {
    	\node[vertex] (v{9+\j}) at ($(c) + (150 - \j * 120:\radiusB)$) {};
       	\draw (v{\number\numexpr 0+\j*3 \relax}) to (v{9+\j});
	}
}
\newcommand{\clauseGadgetDrawLabels}{
        \pgfmathtruncatemacro\k{1+0}
 	\node at ($(c) + (150 - 0 * 120:\radiusB+.4)$) {$\clauseVertex_{j, 1}$};
        \pgfmathtruncatemacro\k{1+1}
 	\node at ($(c) + (150 - 1 * 120:\radiusB+.4)$) {$\clauseVertex_{j, 2}$};
        \pgfmathtruncatemacro\k{1+2}
 	\node at ($(c) + (160 - 2 * 120:\radiusB+.2)$) {$\clauseVertex_{j, 3}$};
}
\newcommand{\clauseRounding}[2][1.7]{
	\coordinate (c) at (0, 0);
	\foreach \j in {1, ..., 9} 
		\node (v{\j}) at ($(c) + (-70 - \j * 40:#1)$) {\GetListMember{#2}{\j}};
}
\newcommand{\XSays}[3]{{\color{#2}
      {$\rule[-0.12cm]{0.2in}{0.5cm}$\fbox{\tt
            #1:} }%
      \itshape #3
      \marginpar{\color{#2}\tt #1}%
      \def\comment{#3}\def\empty{}\ifx\comment\empty\else
      {$\rule[0.1cm]{0.3in}{0.1cm}$\fbox{\tt
            end}$\rule[0.1cm]{0.3in}{0.1cm}$} \fi
   }%
}
\newcommand{\HH}[1]{}
\newcommand{\DK}[1]{}
\newcommand{\abs}[1]{\ensuremath{\left| \, #1 \,\right|}\xspace}
\newcommand{\Wlog}{W.\,l.\,o.\,g.\xspace}
\newcommand{\eps}{\ensuremath\varepsilon\xspace}
\title{Shortest-Path-Preserving Rounding}
\author{Herman Haverkort \and David K\"ubel \and Elmar Langetepe}
\authorrunning{H. Haverkort et al.}
\institute{Universit\"at Bonn, Germany, \email{\{haverkort,dkuebel,elmar.langetepe\}@uni-bonn.de}}
\begin{document}
\maketitle              
\begin{abstract}
Various applications of graphs, in particular applications related to finding shortest paths, naturally get inputs with real weights on the edges. However, for algorithmic or visualization reasons, inputs with integer weights would often be preferable or even required. This raises the following question: given an undirected graph with non-negative real weights on the edges and an error threshold~$\eps$, how efficiently can we decide whether we can round all weights such that shortest paths are maintained, and the change of weight of each shortest path is less than~$\eps$? So far, only for path-shaped graphs a polynomial-time algorithm was known. In this paper we prove, by reduction from 3-SAT, that, in general, the problem is NP-hard. However, if the graph is a tree with $n$ vertices, the problem can be solved in $O(n^2)$ time.
\keywords{Algorithms \and Graph \and Graph drawing \and Rounding \and Shortest Path.}
\end{abstract}

\newcommand{\graph}{\ensuremath{G}\xspace}
\newcommand{\vertexSet}{\ensuremath{V}\xspace}
\newcommand{\edgeSet}{\ensuremath{E}\xspace}
\newcommand{\weights}{\ensuremath{\omega}\xspace}
\newcommand{\roundedWeights}{\ensuremath{\tilde{\omega}}\xspace}
\newcommand{\roundedGraph}{\ensuremath{\tilde{\graph}}\xspace}
\newcommand{\SPPR}{\textsc{WSPPR}\xspace}
\newcommand{\SSPPR}{\textsc{SSPPR}\xspace}
\newcommand{\DPR}{\textsc{DPR}\xspace}
\newcommand{\shortestPath}[2][]{\ensuremath{\pi_{#1}(#2)}\xspace}
\newcommand{\rounding}[1][\eps]{\ensuremath{#1}-rounding\xspace}
\newcommand{\roundings}[1][\eps]{\ensuremath{#1}-roundings\xspace}
\section{Introduction}
Consider a transportation network, modelled as an undirected graph, with a weight function on the edges that represents the time (or cost) it takes to travel each edge. In this paper, we will also refer to the weights as lengths. For several applications, it would be advantageous if the weights are small integers. For example, one could then draw a zone map of the network such that the number of zone boundaries crossed by each shortest path corresponds to the weight of the path~\cite{Haverkort2014}. This raises the following question: given a transportation network with weights for all edges, normalized such that weight~1 corresponds to the intended zone diameter of the map, how can we round the weights to integers such that shortest paths are maintained? If we can do this, for a well-chosen zone diameter, then we can draw a zone map that provides a fairly accurate representation of travel costs, and which is easier to read and use than a map in which the true travel costs are written in full detail next to each edge\footnote{The zone diameter should be chosen for a good balance between precision and complexity of the map. For an unambiguous map drawing it may also be required that after rounding, there are no cycles of weight 1.}.\HH{Is the footnote useful, or just distracting?} Other applications that could take advantage of rounded weights include algorithms to compute shortest paths: there are algorithms that are more efficient with small integer weights than with arbitrary, real weights~\cite{Thorup2003}.
Funke and Storandt~\cite{FunkeStorandt2016} cite space efficiency, the speed of arithmetic operations, and stability as advantages of low-precision edge weights.

However, as argued and demonstrated by Funke and Storandt~\cite{FunkeStorandt2016,Storandt2018}, naively rounding weights to the nearest integer values could lead to rounding errors accumulating in such a way, that the structure of optimal paths in the graph changes, which can be highly undesirable. When rounding weights naively, some paths may see their lengths doubled whereas other, arbitrarily long paths may see their lengths reduced to zero~\cite{FunkeStorandt2016}. Funke and Storandt argue that randomized rounding is also likely to cause unacceptable errors in any graph that is large enough~\cite{FunkeStorandt2016}.

This brings us to the following problem statement. Consider an undirected graph, denoted by $\graph = (\vertexSet, \edgeSet, \weights)$, with vertex set $\vertexSet$, edge set $\edgeSet$, and a weight function $\weights: \edgeSet \to \mathbb{R}_{\geq 0}$. A \emph{simple} path in $\graph$ is a sequence $\pi$ of distinct vertices $v_1, \ldots, v_j$, where $\lbrace v_i, v_{i+1} \rbrace \in\edgeSet$ for $1 \leq i \leq j-1$. By $\weights(\pi)$ we denote the weight of the path $\pi$, that is, $\sum_{i=1}^{j-1} \weights(\lbrace v_i, v_{i+1} \rbrace)$. A \emph{shortest} path in $\graph$ is a simple path $v_1, \ldots, v_j$ that has minimum  weight among all paths from $v_1$ to $v_j$ in $\graph$.

\begin{definition}[path-oblivious/weak/strong \rounding]
Let $\graph = (\vertexSet, \edgeSet, \weights)$ be an undirected graph with a weight function $\weights: \edgeSet \to \mathbb{R}_{\geq 0}$. We call $\roundedWeights : \edgeSet \to \mathbb{N}_0$ a \emph{path-oblivious \rounding} on \graph if the following condition holds:
\begin{enumerate}
\item \label{condition:weights}
For any shortest path $\pi$ in $\graph$, we have $\abs{\roundedWeights(\pi) - \weights(\pi)} < \eps$, that is, between $\weights$ and $\roundedWeights$, the weight of any shortest path in \graph changes by strictly less than~$\eps$.
\end{enumerate}
We call $\roundedWeights$ a \emph{weak \rounding} if in addition, the following condition holds:
\begin{enumerate}\setcounter{enumi}{1}
\item \label{condition:weakmaintenance}
Any shortest path in $\graph = (\vertexSet, \edgeSet, \weights)$ is also a shortest path in $\roundedGraph = (\vertexSet, \edgeSet, \roundedWeights)$.
\end{enumerate} 
Moreover, we call \roundedWeights a \emph{strong \rounding} if it is a weak \rounding and additionally the following condition is satisfied:
\begin{enumerate}\setcounter{enumi}{2}
\item \label{condition:strongmaintenance}
Any shortest path in $\roundedGraph$ is also a shortest path in $\graph$.
\end{enumerate}
\end{definition}

Of course, for any finite undirected weighted graph there is an $\eps$ such that the graph admits a trivial weak $\eps$-rounding: we could simply choose $\eps$ to be larger than the diameter of the graph and round all weights down to zero\footnote{We do not know whether there is always an $\eps$ such that the graph admits a \emph{strong} $\eps$-rounding.}.\HH{See footnote. We should this give some more thought: it must be possible to prove or disprove this for strong $\eps$-roundings.} However, that would make the concept of rounding moot. We would rather have an \rounding for a small value of $\eps$ such as $\eps = 1$, but in that case, an \rounding does not always exist. For example, a star that consists of three edges of weight $1/2$ does not admit a \rounding[1]: at least two of the three edges would have to be rounded in the same way, but if we would round two edges down, there would be a shortest path with rounding error $-1$; if we would round two edges up, there would be a shortest path with rounding error 1.
Given an undirected graph $\graph = (\vertexSet, \edgeSet, \weights)$ with non-negative real weight function $\weights: \edgeSet \to \mathbb{R}_{\geq 0}$ and a error tolerance \eps, our problem is therefore to decide whether \graph admits a path-oblivious, weak, or strong \rounding. 

Our first goal in solving these problems is to establish for what classes of graphs efficient exact algorithms may exist. In this paper, we show that all three versions of the decision problem are NP-hard for general graphs, but can be solved in quadratic time on trees. In fact, trees always admit a 2-rounding, and given a tree of $n$ vertices, we can compute, in $O(n^2 \log n)$ time, the smallest $\eps$ such that the tree admits an $\eps'$-rounding for any $\eps' > \eps$. The algorithm is constructive, that is, it can easily be adapted to produce the corresponding weights $\roundedWeights$. We compare our results to related work and discuss directions for further research in the last section of the paper.

\newcommand{\NP}{\textsc{NP}}
\newcommand{\threeSAT}{\textsc{3-SAT}\xspace}
\newcommand{\threeCNF}{\textsc{3-CNF}\xspace}
\newcommand{\reductGraph}{\ensuremath{G_\alpha}\xspace}
\newcommand{\formula}{\ensuremath{\alpha}\xspace}
\newcommand{\mapping}{\ensuremath{\psi}\xspace}
\newcommand{\false}{\texttt{false}\xspace}
\newcommand{\true}{\texttt{true}\xspace}
\newcommand{\variableVertex}{\ensuremath{v}\xspace}
\newcommand{\nvariableVertex}{\ensuremath{\overline{v}}\xspace}
\newcommand{\clauseVertex}{\ensuremath{c}\xspace}
\newcommand{\diameter}{\ensuremath{D}\xspace}
\newcommand{\myTriangle}[1][]{\ensuremath{\Delta_{i,#1}}\xspace}
\newcommand{\cvEdge}{clause-variable edge\xspace}
\newcommand{\scEdge}{shortcut edge\xspace}
\newcommand{\cvEdges}{clause-variable edges\xspace}
\newcommand{\scEdges}{shortcut edges\xspace}
\newcommand{\vGadget}{variable gadget\xspace}
\newcommand{\vGadgets}{variable gadgets\xspace}
\newcommand{\cGadget}{clause gadget\xspace}
\newcommand{\cGadgets}{clause gadgets\xspace}
\newcommand{\literal}{\ensuremath{l}\xspace}
\newcommand{\edgeOf}[1]{\ensuremath{e(#1)}\xspace}

\section{Complexity of the problem}
We prove the \NP-hardness of the problem by reduction from \threeSAT.
More precisely, we will show that it is \NP-hard to decide, given an edge-weighted graph \graph and an error tolerance $\eps$, whether \graph admits a path-oblivious, weak, or strong \rounding.
For simplicity, we use $\eps = 1$, but the proof is easily adapted to any $\eps \in (7/8, 1]$.
The reduction we present, proves hardness for all three variants of the problem.

The \threeSAT problem is the following. We are given a \threeCNF formula, that is, a boolean formula \formula in conjunctive normal form, where each of the $m$ clauses consists of exactly three literals. Each literal is either one of $n$ variables $x_1, x_2, \ldots, x_n$ or its negation.
Decide whether \formula is satisfiable.
\Wlog we assume that every variable appears at most once in each clause of the \threeSAT formula.%
\footnote{%
Otherwise, we transform \formula in polynomial time as follows: we first remove any clauses that contain a variable and its negation, since they are always satisfied. Next we introduce three new variables $a$, $b$, and $c$. In any clause that contains two or three copies of the same literal, we replace the second copy by $b$ and the third copy (if it exists) by $c$. Finally we add clauses 
$(\neg a \vee \neg b \vee \neg c),
 (\neg a \vee \neg b \vee      c),
 (\neg a \vee      b \vee \neg c),
 (     a \vee \neg b \vee \neg c),
 (     a \vee \neg b \vee      c),
 (     a \vee      b \vee \neg c)$.
This ensures that every truth assignment to $a, b$ and $c$ makes at least one clause false, unless $b$ and $c$ are both false, and therefore do not affect the original clauses in which they replace a duplicate literal.}

In the following, we show how to construct a graph $\reductGraph = (V, E, \weights)$ for a given \threeCNF formula \formula such that \reductGraph admits a strong \rounding[1] if \formula is satisfiable, whereas \reductGraph does not even admit a path-oblivious \rounding[1] if \formula is not satisfiable.
To describe \reductGraph, we introduce subgraphs called \emph{\vGadgets} and \emph{\cGadgets}, as well as \emph{\cvEdges} and \emph{\scEdges}.

The idea of the construction is as follows.
In \autoref{lemma:variableGadget}, we will show that a \vGadget admits exactly two strong \roundings[1].
We identify these two roundings with the assignments \true and \false of a boolean variable.
Using \cvEdges, the state of a \vGadget can be transferred to a \cGadget (\autoref{lemma:clauseHandles}).
Locally, the \cGadget admits a \rounding[1] if and only if one of the variable assignments (transferred via \cvEdges) satisfies the clause (\autoref{lemma:clauseGadget}).
We use \scEdges to ensure that shortest paths in \reductGraph that do not contribute to modelling \formula are easy to analyse and unique---before and after rounding the weights (\autoref{lemma:shortcutEdge}).

To design a \emph{\vGadget}, first consider two edges attached to a triangle, where each edge is of weight $2.5$, as illustrated in \autoref{figure:variableGadget:initial}.
In a \rounding[1], the choice of the rounding for \edgeOf{\variableVertex_{i,0}}, the edge incident on $\variableVertex_{i,0}$, determines the rounding of the remaining edges; see  \autoref{figure:variableGadget:roundedDown} and \autoref{figure:variableGadget:roundedUp}.
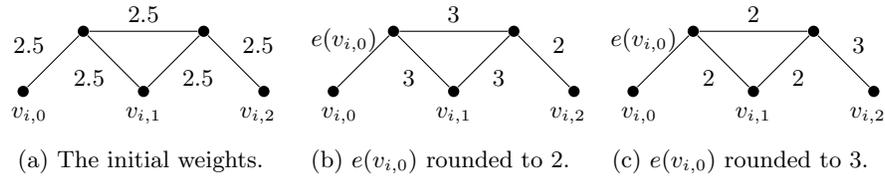
\begin{figure}[tb]
\centering
\begin{subfigure}[b]{0.3\textwidth}
\centering
\begin{tikzpicture}[scale=.8, vertex/.style={circle, fill, inner sep=1.5pt}]
\coordinate (v0) at (0, 0);
\node[vertex] (col{0}v2) at (v0) {};

\foreach \i in {2}{
	\node[vertex] (col{\i}v1) at ($(v0) + (\i - 1, 1)$) {};
	\node[vertex] (col{\i}v2) at ($(v0) + (\i + 1, 1)$) {};
	\node[vertex] (col{\i}v3) at ($(v0) + (\i 	, 	0)$) {};

	\draw (col{\i}v1) to node[midway, above left] {2.5} (col{\number\numexpr \i - 2 \relax}v2);	
	\draw (col{\i}v1) to node[midway, above] {2.5} (col{\i}v2);
	\draw (col{\i}v1) to node[midway, below left] {2.5} (col{\i}v3);
	\draw (col{\i}v2) to node[midway, below right] {2.5} (col{\i}v3);
}

\node[vertex] (col{4}v2) at (4, 0) {};
\draw (col{4}v2) to node[midway, above right] {2.5} (col{2}v2);

\node at ($(col{0}v2) - (-0.1, .4)$) {$\variableVertex_{i,0}$};
\node at ($(col{2}v3) - (0, .4)$) {$\variableVertex_{i,1}$};
\node at ($(col{4}v2) - (0.1, .4)$) {$\variableVertex_{i,2}$};
\end{tikzpicture}
\caption{The initial weights.}
\label{figure:variableGadget:initial}
\end{subfigure}
~
\begin{subfigure}[b]{0.3\textwidth}
\centering
\begin{tikzpicture}[scale=.8, vertex/.style={circle, fill, inner sep=1.5pt}]
\coordinate (v0) at (0, 0);
\node[vertex] (col{0}v2) at (v0) {};

\foreach \i in {2}{
	\node[vertex] (col{\i}v1) at ($(v0) + (\i - 1, 1)$) {};
	\node[vertex] (col{\i}v2) at ($(v0) + (\i + 1, 1)$) {};
	\node[vertex] (col{\i}v3) at ($(v0) + (\i 	, 0)$) {};

	\draw (col{\i}v1) to (col{\number\numexpr \i - 2 \relax}v2);	
	\draw (col{\i}v1) to node[midway, above] {3} (col{\i}v2);
	\draw (col{\i}v1) to node[midway, below left] {3} (col{\i}v3);
	\draw (col{\i}v2) to node[midway, below right] {3} (col{\i}v3);
}

\node[vertex] (col{4}v2) at (4, 0) {};
\draw (col{4}v2) to node[midway, above right] {2} (col{2}v2);

\node at ($(col{0}v2) + (0.18, .85)$) {\edgeOf{\variableVertex_{i,0}}};
\node at ($(col{0}v2) - (-0.1, .4)$) {$\variableVertex_{i,0}$};
\node at ($(col{2}v3) - (0, .4)$) {$\variableVertex_{i,1}$};
\node at ($(col{4}v2) - (0.1, .4)$) {$\variableVertex_{i,2}$};
\end{tikzpicture}
\caption{\edgeOf{\variableVertex_{i,0}} rounded to 2.}
\label{figure:variableGadget:roundedDown}
\end{subfigure}
~
\begin{subfigure}[b]{0.3\textwidth}
\centering
\begin{tikzpicture}[scale=.8, vertex/.style={circle, fill, inner sep=1.5pt}]
\coordinate (v0) at (0, 0);
\node[vertex] (col{0}v2) at (v0) {};

\foreach \i in {2}{
	\node[vertex] (col{\i}v1) at ($(v0) + (\i - 1, 1)$) {};
	\node[vertex] (col{\i}v2) at ($(v0) + (\i + 1, 1)$) {};
	\node[vertex] (col{\i}v3) at ($(v0) + (\i 	, 0)$) {};

	\draw (col{\i}v1) to (col{\number\numexpr \i - 2 \relax}v2);	
	\draw (col{\i}v1) to node[midway, above] {2} (col{\i}v2);
	\draw (col{\i}v1) to node[midway, below left] {2} (col{\i}v3);
	\draw (col{\i}v2) to node[midway, below right] {2} (col{\i}v3);
}

\node[vertex] (col{4}v2) at (4, 0) {};
\draw (col{4}v2) to node[midway, above right] {3} (col{2}v2);

\node at ($(col{0}v2) + (0.18, .85)$) {\edgeOf{\variableVertex_{i,0}}};
\node at ($(col{0}v2) - (-0.1, .4)$) {$\variableVertex_{i,0}$};
\node at ($(col{2}v3) - (0, .4)$) {$\variableVertex_{i,1}$};
\node at ($(col{4}v2) - (0.1, .4)$) {$\variableVertex_{i,2}$};
\end{tikzpicture}
\caption{\edgeOf{\variableVertex_{i,0}} rounded to 3.}
\label{figure:variableGadget:roundedUp}
\end{subfigure}
\caption{
A minimal \vGadget.
In a \rounding[1] of this gadget, any shortest path of two edges has to round one edge up and the other edge down---otherwise the total rounding error on the path would be $\pm1$, violating Condition~\ref{condition:weights} of a \rounding[1].
Thus, the top triangle edge must be rounded in the opposite way as compared to the edges \edgeOf{\variableVertex_{i,0}} and \edgeOf{\variableVertex_{i,2}}, incident on $\variableVertex_{i,0}$ and $\variableVertex_{i,2}$, respectively.
These arguments imply that \edgeOf{\variableVertex_{i,0}} and \edgeOf{\variableVertex_{i,2}} have to be rounded in the same way and \emph{all} triangle edges are rounded in the opposite way.
Note that paths containing two triangle edges have rounding error $\pm 1$, but such paths are not shortest paths, neither before nor after rounding.
}
\label{figure:minimalVariableGadget}
\end{figure}
To obtain a \vGadget for variable $x_i$, we proceed as follows.
Assume that $x_i$ appears in $h$ literals $\literal_1, \ldots, \literal_{h}$ of \formula.
We construct $h$ triangles $\myTriangle[1], \ldots, \myTriangle[h]$, where each $\myTriangle[k]$ (for $k \in \{1,...,h\}$) has a left vertex, a right vertex, and a \emph{base} (bottom) vertex; we label the base vertex $\variableVertex_{i,k}$. We chain up the triangles by including an edge between the right vertex of $\myTriangle[k]$ and the left vertex of $\myTriangle[k+1]$ for each $k \in \{1,...,h-1\}$. To the left vertex of $\myTriangle[1]$, we attach another vertex $\variableVertex_{i,0}$, and to the right vertex of $\myTriangle[h]$, we attach another vertex $\variableVertex_{i,h+1}$, as shown in \autoref{figure:variableGadget}.
Finally, for every $1\leq k \leq h$, if $\literal_k = \neg x_i$, we add another vertex $\overline{\variableVertex}_{i,k}$, called \emph{inverter}, which we connect to $\variableVertex_{i,k}$.
All edges of the \vGadget have an initial weight of 2.5.

\begin{figure}[tb]
\centering
\begin{tikzpicture}[scale=0.7, vertex/.style={circle, fill, inner sep=1.5pt}]
\coordinate (v0) at (0, 0);
\node[vertex] (col{0}v2) at (v0) {};

\foreach \i in {2, 6, 12}{
	\node[vertex] (col{\i}v1) at ($(v0) + (\i-1, 1)$) {};
	\node[vertex] (col{\i}v2) at ($(v0) + (\i+1, 1)$) {};
	\node[vertex] (col{\i}v3) at ($(v0) + (\i, 	0)$) {};

	\draw (col{\i}v1) to (col{\i}v2);
	\draw (col{\i}v1) to (col{\i}v3);
	\draw (col{\i}v2) to (col{\i}v3);
}

\node[vertex] (col{14}v2) at (14, 0) {};
\node[vertex] (col{6}v4) at (6, -1) {};
\node at ($(col{0}v2) - (.3,  .3)$) {$\variableVertex_{i,0}$};
\node at ($(col{2}v3) + (0,  .6)$) {\myTriangle[1]};
\node at ($(col{2}v3) - (.3,  .3)$) {$\variableVertex_{i,1}$};
\node at ($(col{6}v3) + (0,  .6)$) {\myTriangle[2]};
\node at ($(col{6}v3) - (.4,  .3)$) {$\variableVertex_{i,2}$};
\node at ($(col{6}v4) - (.4,  .3)$) {$\overline{\variableVertex}_{i,2}$};
\node at ($(col{12}v3) + (0,  .6)$) {\myTriangle[h]};
\node at ($(col{12}v3) - (.3,  .3)$) {$\variableVertex_{i,h}$};
\node at ($(col{14}v2) - (.3,  .3)$) {$\variableVertex_{i,h+1}$};

\node at (9,.5) {$\ldots$};
\draw (col{0}v2) to node[midway, above left] {\edgeOf{\variableVertex_{i,0}}} (col{2}v1);
\draw (col{2}v2) to (col{6}v1);
\draw (col{6}v2) to ($(v0) + (7.5,1)$);
\draw (col{12}v1) to ($(v0) + (10.5,1)$);
\draw (col{14}v2) to (col{12}v2);
\draw (col{6}v3) to (col{6}v4);
\end{tikzpicture}
\caption{
The \vGadget for $x_i$, where $x_i$ appears in $h$ literals $\literal_1, \ldots, \literal_{h}$ of~\formula.
Here, $\literal_2 = \neg x_i$, so an additional vertex $\overline{\variableVertex}_{i,2}$ is added and attached to $\variableVertex_{i,2}$.
All edges have weight 2.5, so the choice of the rounding for \edgeOf{\variableVertex_{i,0}} determines the rounding of all the other edges in a \rounding[1]:
triangle edges have to be rounded complementary to non-triangle edges.
}
\label{figure:variableGadget}
\end{figure}

We call the edges of $\myTriangle[1], \ldots, \myTriangle[h]$ \emph{triangle edges}.
Moreover, with \edgeOf{\variableVertex_{i,0}} we denote the unique edge of the \vGadget attached to $\variableVertex_{i,0}$.
A \vGadget has only two different \roundings[1]:
\begin{lemma}[{\roundings[1]} of \vGadgets]\label{lemma:variableGadget}
A \vGadget admits exactly two \roundings[1] (both of which are strong \roundings[1]):
either all triangle edges are rounded up and all other edges down, or vice versa.

Moreover, for any two vertices $u, v$ of the gadget, the rounding error of the unique shortest path from $u$ to $v$ is either zero or equal to the rounding error on the last edge of the path ending at $v$.
\end{lemma}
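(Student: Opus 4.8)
The plan is to analyze the structure of a \vGadget directly from its definition. The gadget consists of a chain of triangles $\myTriangle[1], \ldots, \myTriangle[h]$, connected by non-triangle linking edges, with pendant vertices $\variableVertex_{i,0}$ and $\variableVertex_{i,h+1}$ at the ends, optional inverter vertices hanging off certain base vertices, and all initial weights equal to $2.5$. The key local fact, already established informally in the discussion accompanying \autoref{figure:minimalVariableGadget}, is that for any shortest path consisting of exactly two edges, one edge must be rounded up to $3$ and the other rounded down to $2$; otherwise the total rounding error is $\pm 1$, violating Condition~\ref{condition:weights}. First I would isolate this as the fundamental constraint: \emph{adjacent edges lying on a common two-edge shortest path must be rounded in opposite directions.}

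**Propagating the rounding.**

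The main body of the argument is a propagation step. I would show that within each triangle, the two edges meeting at a shared vertex that form a two-edge shortest path force complementary rounding, and that the triangle together with its incident linking edges forms a structure where, once \edgeOf{\variableVertex_{i,0}} is fixed, every other edge's rounding is determined. Concretely, the path through each triangle's apex (its left and right vertices) via a single triangle edge, combined with the non-triangle edges entering the triangle, yields that all three triangle edges round the same way and opposite to the adjacent non-triangle edges; this matches the minimal case of \autoref{figure:minimalVariableGadget}. Walking along the chain from $\variableVertex_{i,0}$ to $\variableVertex_{i,h+1}$, I would argue by induction that the up/down pattern alternates consistently, so there are at most two candidate \roundings[1], determined by the binary choice on \edgeOf{\variableVertex_{i,0}}: either all triangle edges up and all non-triangle edges (links, pendants, inverters) down, or the reverse. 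A short complementary check confirms both candidate assignments actually satisfy Condition~\ref{condition:weights} on every shortest path and that conditions \ref{condition:weakmaintenance} and \ref{condition:strongmaintenance} hold, so both are genuine strong \roundings[1], giving exactly two.

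**The path-error statement and the main obstacle.**

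For the last sentence, I would characterize the unique shortest path between any two vertices $u,v$ of the gadget and track its accumulated rounding error. Because the gadget is essentially a chain (a tree once we fix the ``short side'' of each triangle), the shortest path is unique, and along it triangle and non-triangle edges alternate in the sense that consecutive edges sharing a two-edge shortest subpath round oppositely. I expect the main obstacle to be making this alternation argument watertight: a path may traverse a triangle either along its single apex edge or across a base vertex, and one must verify that in every case the per-edge errors of $+1/2$ and $-1/2$ cancel in consecutive pairs, so that the running sum stays in $\{-1/2, 0, +1/2\}$ and collapses to either $0$ or exactly the error of the final edge at $v$. Handling the pendant and inverter edges at the path's endpoints, and confirming that base-to-base detours through a triangle apex are never shortest (so they never disturb the telescoping), is the delicate bookkeeping that the proof must get right; the earlier remark that two-triangle-edge paths are never shortest paths is precisely the fact that keeps this cancellation clean.
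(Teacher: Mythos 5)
Your proposal follows essentially the same route as the paper's proof: the local two-edge shortest-path constraint from \autoref{figure:minimalVariableGadget}, propagation by induction along the chain of triangles to get exactly two candidate roundings, and the alternation of triangle and non-triangle edges on shortest paths so that errors of $\pm\nicefrac12$ cancel in consecutive pairs, leaving error zero for even-length paths and the last edge's error for odd-length paths. The ``delicate bookkeeping'' you flag is handled in the paper by the single observation that paths with two consecutive triangle edges are never shortest (before or after rounding), which simultaneously yields the alternation, the error bound, and Conditions~\ref{condition:weakmaintenance} and~\ref{condition:strongmaintenance}.
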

\begin{proof}
For a single triangle and its adjacent non-triangle edges, \autoref{figure:minimalVariableGadget} explains that all triangle edges must be rounded in the same way, while non-triangle edges must be rounded in the opposite way.
Tracking the triangles in \autoref{figure:variableGadget} from left to right, the same holds for any number of triangles, by induction.

We will now argue that the two roundings that can be obtained in this way, indeed satisfy the conditions of a strong \rounding[1].
Note that, before and after rounding, paths with two consecutive triangle edges are no shortest paths.
Thus, a simple path in this gadget is a shortest path before rounding if and only if it is a shortest path after rounding, and in any shortest path triangle and non-triangle edges alternate.
Therefore, any shortest path with an even number of edges has rounding error zero; any shortest path with an odd number of edges has the rounding error of its last edge.
This establishes Conditions~\ref{condition:weights}, 
\ref{condition:weakmaintenance} and \ref{condition:strongmaintenance} of a strong \rounding[1] and thus completes the proof of the lemma. 
\end{proof}

From \autoref{lemma:variableGadget}, we obtain that in a \rounding[1], the choice of the rounding for \edgeOf{\variableVertex_{i,0}} determines the rounding of all other edges.

To create a \textit{\cGadget} for clause $C_j$, we take a cycle of nine vertices and nine edges, where each edge gets an initial weight of $3.6$.
Moreover, we attach, to every third vertex along the cycle, a new vertex, called a \emph{knob}, with another edge of weight $2.5$, called a \emph{handle}.
We denote the knobs by $\clauseVertex_{j,1}, \clauseVertex_{j, 2}, \clauseVertex_{j, 3}$, as shown in \autoref{figure:clauseGadget:initial}.
We will use the notation \edgeOf{\clauseVertex_{j,t}} to denote the edge (handle) of the \cGadget that connects $\clauseVertex_{j,t}$ to the nonagon.
Finally, we add a vertex which we connect to every vertex on the cycle with an edge of weight $6$.
Note that the weights of these edges are integer---hence, they cannot be rounded.
A \cGadget has at least three strong \roundings[1]; see \autoref{figure:clauseGadget:rounded}.
However, there is no path-oblivious, weak, or strong \rounding[1] for the clause gadget in which \edgeOf{\clauseVertex_{j,1}}, \edgeOf{\clauseVertex_{j,2}} and \edgeOf{\clauseVertex_{j,3}} are all rounded up, as the following lemma states.

\begin{figure}
\centering
\begin{subfigure}[b]{0.39\textwidth}
	\centering
	\begin{tikzpicture}[-, auto, vertex/.style={circle, fill, inner sep=1.5pt}]
	\clauseGadget{}
	\clauseGadgetDrawLabels{}
	\clauseRounding[.9]{3.6,3.6,3.6,3.6,3.6,3.6,3.6,3.6,3.6}
	\node at ($(c) + (140:1.5)$) {2.5};
	\node at ($(c) + ( 40:1.5)$) {2.5};
	\node at ($(c) + (-80:1.5)$) {2.5};
	\end{tikzpicture}
    \caption{The initial weight of all edges incident on the centre is 6.}
    \label{figure:clauseGadget:initial}
\end{subfigure}
~
\begin{subfigure}[b]{0.58\textwidth}
	\centering
	\begin{tikzpicture}[-, auto, vertex/.style={circle, fill, inner sep=1.5pt}]
	\clauseGadget{}
	\clauseRounding[0.9]{4,3,4, 4,3,4, 4,3,4}
	\node at ($(c) + (140:1.5)$) {2};
	\node at ($(c) + ( 40:1.5)$) {2};
	\node at ($(c) + (-70:1.6)$) {2 or 3};
	\end{tikzpicture}
	~
	\begin{tikzpicture}[-, auto, vertex/.style={circle, fill, inner sep=1.5pt}]
	\clauseGadget{}
	\clauseRounding[0.9]{4,3,4, 3,4,3, 4,3,4}
	\node at ($(c) + (140:1.5)$) {3};
	\node at ($(c) + ( 40:1.5)$) {3};
	\node at ($(c) + (-80:1.5)$) {2};
	\end{tikzpicture}
    \caption{
    Three \roundings[1] of a \cGadget. No two consecutive edges along the nonagon can be rounded down in a \rounding[1]. At most two of the edges $\edgeOf{\clauseVertex_{j,1}}, \edgeOf{\clauseVertex_{j,2}}$ and $\edgeOf{\clauseVertex_{j,3}}$ are rounded up.
    }
    \label{figure:clauseGadget:rounded}
\end{subfigure}
\caption{The \cGadget for clause $C_j$.
}
\label{figure:clauseGadget}
\end{figure}
\begin{lemma}[{\roundings[1]} of a \cGadget]\label{lemma:clauseGadget}
Consider a \cGadget for~$C_j$, and suppose we fix, for each of its handles $\edgeOf{\clauseVertex_{j,1}}, \edgeOf{\clauseVertex_{j,2}}$ and $\edgeOf{\clauseVertex_{j,3}}$, whether its weight is rounded up or down.
The \cGadget now admits a path-oblivious \rounding[1] \roundedWeights if and only if at least one of its three handles is rounded down.
If there is a path-oblivious \rounding[1], there is a strong \rounding[1].
\end{lemma}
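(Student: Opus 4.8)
The plan is to first enumerate all shortest paths of the \cGadget and translate the requirement $\abs{\roundedWeights(\pi) - \weights(\pi)} < 1$ into combinatorial constraints on how the nonagon edges and handles are rounded. Write $e_1, \ldots, e_9$ for the nine cycle edges, indexed so that $e_1$ and $e_9$ are the two edges at the knob $\clauseVertex_{j,1}$, and so on; note that central edges of weight~$6$ are unroundable and that any single roundable edge has error at most~$0.5$, so it can never be rounded badly. The only constraints then come from: (i)~two-edge cycle paths (weight~$7.2$), which force that \emph{no two consecutive nonagon edges are both rounded down}; (ii)~three-edge cycle paths (weight~$10.8$), which force that \emph{no three consecutive nonagon edges are all rounded up}; (iii)~paths from a knob along one, two, or three cycle edges; and (iv)~the three knob-to-knob paths of weight~$15.8$, each running through the three edges of one ``segment'' between consecutive knobs. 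A short case check shows that for a segment whose two handles are \emph{both} rounded up, constraint~(iv) forces at most one of its three edges to be rounded up. Establishing that this list is complete---in particular that every knob-to-far-vertex path and every distance-$4$ cycle path is realised through the unroundable central edges, and is hence harmless---is the bookkeeping part of the argument.

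For the ``only if'' direction, suppose all three handles are rounded up. The three segments partition the nine cycle edges, and by the segment form of~(iv) each contains at most one up-rounded edge, so at most three cycle edges are rounded up in total. On the other hand, constraint~(i) says the set of down-rounded edges is an independent set in the cycle $C_9$, hence has size at most the independence number $\alpha(C_9) = 4$, so at least five cycle edges are rounded up. Since $5 > 3$, no path-oblivious \rounding[1] can exist.

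For the ``if'' direction I would exhibit a concrete rounding for every handle configuration with at least one handle down, in two regimes (using the threefold rotational symmetry of the gadget to fix which handles are down). If at least two handles are rounded down, round every knob-incident cycle edge up and the three remaining ``middle'' edges down; each three-edge window then has exactly one down-edge, so~(i) and~(ii) hold, and for each segment~(iv) reduces to ``the two segment handles are not both up'', which is guaranteed since at most one handle is up. If exactly one handle is down, round the cycle edges in the alternating pattern $4,3,4,3,4,3,4,3,4$, aligned so that the two edges at the down-handle are both up, as in the right gadget of \autoref{figure:clauseGadget:rounded}. In both cases I would verify (i)--(iv) by direct substitution; this is routine because the down-edges are non-adjacent and no three consecutive edges are all up.

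Finally, to upgrade each path-oblivious rounding to a strong \rounding[1], I would check that no new shortest paths are created: constraint~(i) keeps every four-edge cycle path at weight $\ge 14 > 12$, so every distance-$4$ pair retains its unique central shortest path, and a direct comparison shows that for every knob-to-vertex and every knob-to-knob pair the cycle route stays strictly shorter than the competing central route after rounding, exactly as before. Hence the shortest-path relation is unchanged, establishing Conditions~\ref{condition:weakmaintenance} and~\ref{condition:strongmaintenance}. I expect the main obstacle to be the first step: arguing rigorously and exhaustively that the shortest paths are precisely those listed, so that the constraint set~(i)--(iv) is genuinely complete. Once that is in hand, both the counting argument for necessity and the two explicit patterns for sufficiency are short.
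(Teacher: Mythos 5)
Your proposal is correct and follows essentially the same route as the paper: your two sufficiency patterns are exactly the roundings the paper exhibits in \autoref{figure:clauseGadget:rounded}, and your necessity argument (knob-to-knob paths force at least two down-rounded edges per nonagon segment, while no two adjacent nonagon edges may both be rounded down) is the paper's own, with the independence number $\alpha(C_9)=4$ standing in for its pigeonhole step. One trivial slip that changes nothing: a nonagon edge rounded down has error $-0.6$, not absolute value at most $0.5$, but since $0.6<1$ your conclusion that single-edge paths are harmless still stands.
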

\begin{proof}
\autoref{figure:clauseGadget:rounded} shows a strong \rounding[1] for the \cGadget for the cases in which one, two, or three of the edges $\edgeOf{\clauseVertex_{j,1}}, \edgeOf{\clauseVertex_{j,2}}$ and $\edgeOf{\clauseVertex_{j,3}}$ are rounded down.
It remains to show that if none of these three edges are rounded down, it is impossible to obtain a \rounding[1] (not even a path-oblivious one) for the complete gadget.

Indeed, assume to the contrary that the weights of \edgeOf{\clauseVertex_{j,1}}, \edgeOf{\clauseVertex_{j,2}}, and \edgeOf{\clauseVertex_{j,3}} are all rounded up.
For the shortest path between any pair of $\clauseVertex_{j,1}, \clauseVertex_{j,2}, \clauseVertex_{j,3}$, we now know that (at least) two of the three edges along the nonagon have to be rounded down. 
Otherwise the shortest path of length $2\cdot 2.5 + 3 \cdot 3.6 = 15.8$ with respect to \weights would have weight (at least) $2\cdot 3 + 1 \cdot 3 + 2 \cdot 4 = 17$ with respect to \roundedWeights, contradicting Condition~\ref{condition:weights} of a \rounding[1].
However, if the weight of at least six of the nine edges along the nonagon has to be rounded down, then there have to be two adjacent edges $\lbrace u, v \rbrace, \lbrace v, w \rbrace$, each of weight 3.6, that are rounded down.
Consequently, the shortest path from $u$ to $w$ has weight 7.2 with respect to \weights and 6 with respect to \roundedWeights, which again contradicts Condition~\ref{condition:weights} of a \rounding[1].
\end{proof}
Next, we introduce \emph{\cvEdges} to connect the gadgets of variables to the gadgets of clauses that contain these variables. 
So if variable $x_i$ appears in clause $C_j$, we connect the corresponding gadgets with exactly one \cvEdge of weight $\diameter := 5m + 20$ according to the following rule:
if the $t$-th literal in $C_j$ is $x_i$, then we connect a base vertex of the \vGadget for $x_i$ to $\clauseVertex_{j,t}$, using an edge of weight \diameter;
if the $t$-th literal in $C_j$ is $\neg x_i$, then we connect an inverter of the \vGadget for $x_i$ to $\clauseVertex_{j,t}$, using an edge of weight \diameter.
We do this such that exactly one \cvEdge is connected to each inverter, and exactly one \cvEdge is connected to each base vertex that is not attached to an inverter.
By design, the \vGadgets have the right numbers of base vertices and inverters to make this possible.

Note that \cvEdges do not invalidate Lemmas~\ref{lemma:variableGadget} and \ref{lemma:clauseGadget}, that is, they still hold with respect to the shortest paths between any pair of vertices of the variable or clause gadget, respectively. This can be seen as follows.
There are $m$ clauses and each variable appears in each clause at most once.
Hence, the diameter of a \vGadget is at most $(2m + 1) \cdot 2.5$. The diameter of a \cGadget is 15.8.
Thus, the variable and \cGadgets all have diameter less than $D-2$.
Therefore, before rounding, no path between two vertices of the same gadget that uses a clause-variable edge can be a shortest path.
Moreover, when we choose a \rounding[1] for each gadget separately, the rounded weight of the shortest path between any pair of vertices within a gadget will still be less than $D-1$, while the rounded weight of any path using a \cvEdge will still be at least $D$.
Therefore, also after rounding, no path between two vertices of the same gadget that uses a clause-variable edge can be a shortest path.
Thus, adding the \cvEdges does not invalidate Lemmas~\ref{lemma:variableGadget} and \ref{lemma:clauseGadget}.

Further note that due to this construction, the choice for \edgeOf{\variableVertex_{i,0}} also determines the rounding for \edgeOf{\clauseVertex_{j,t}} in a \rounding[1]:
\begin{lemma}[\cvEdges and {\roundings[1]}]\label{lemma:clauseHandles}
For any \rounding[1] on \reductGraph:\begin{itemize}
\item if $\clauseVertex_{j,t}$ is connected to a base vertex of the \vGadget for $x_i$, then $\edgeOf{\clauseVertex_{j,t}}$ is rounded in the same way as $\edgeOf{\variableVertex_{i,0}}$;
\item if $\clauseVertex_{j,t}$ is connected to an inverter vertex of the \vGadget for $x_i$, then $\edgeOf{\clauseVertex_{j,t}}$ is rounded in the opposite way as $\edgeOf{\variableVertex_{i,0}}$.
\end{itemize}
\end{lemma}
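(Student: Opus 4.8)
The plan is to derive the claimed relation by exhibiting, in each of the two cases, a single shortest path of \reductGraph that crosses the relevant \cvEdge and whose total rounding error is the sum of $\delta_h$, the rounding error of the handle $\edgeOf{\clauseVertex_{j,t}}$, and the error of exactly one other edge of the \vGadget. Condition~\ref{condition:weights} will then force these two errors to cancel, which pins down the rounding of the handle. Throughout, the \emph{rounding error} of a $2.5$-edge is $+\tfrac{1}{2}$ if it is rounded up (to $3$) and $-\tfrac{1}{2}$ if rounded down (to $2$). Since the \cvEdge has the integer weight \diameter, it is never rounded and contributes error $0$. By \autoref{lemma:variableGadget}, in either \rounding[1] of the \vGadget all non-triangle edges --- in particular $\edgeOf{\variableVertex_{i,0}}$ and every inverter edge --- carry the same error $\delta \in \{-\tfrac{1}{2}, +\tfrac{1}{2}\}$, while every triangle edge carries error $-\delta$. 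So ``rounded in the same way as $\edgeOf{\variableVertex_{i,0}}$'' means $\delta_h = \delta$, and ``the opposite way'' means $\delta_h = -\delta$. Recall also from the discussion preceding this lemma that any shortest path leaving a gadget crosses exactly one \cvEdge, and that the portion of such a path lying inside a gadget is itself a shortest path there; hence \autoref{lemma:variableGadget} applies to its \vGadget-portion, and the error of a crossing path equals the sum of the errors of its \vGadget-portion, its \cvEdge (namely $0$), and its \cGadget-portion.

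For the base-vertex case, suppose $\clauseVertex_{j,t}$ is joined to the base vertex $\variableVertex_{i,k}$. Let $w$ be the nonagon vertex to which the handle $\edgeOf{\clauseVertex_{j,t}}$ is attached, and let $u$ be the left vertex of \myTriangle[k]. I would consider the path $u \to \variableVertex_{i,k} \to \clauseVertex_{j,t} \to w$. Its \vGadget-portion is the single triangle edge $\{u, \variableVertex_{i,k}\}$, which is the unique shortest $u$--$\variableVertex_{i,k}$ path since the detour through the third triangle vertex costs $5 > 2.5$; its \cGadget-portion is the single handle, the only edge incident on the degree-two knob $\clauseVertex_{j,t}$ apart from the \cvEdge. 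Its rounding error is therefore $(-\delta) + 0 + \delta_h$, and Condition~\ref{condition:weights} gives $|\delta_h - \delta| < 1$; as $\delta, \delta_h \in \{\pm\tfrac{1}{2}\}$, this forces $\delta_h = \delta$.

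For the inverter case, suppose $\clauseVertex_{j,t}$ is joined to the inverter $\nvariableVertex_{i,k}$. Here I would take the path $\variableVertex_{i,0} \to \cdots \to \variableVertex_{i,k} \to \nvariableVertex_{i,k} \to \clauseVertex_{j,t} \to w$, whose \vGadget-portion is the unique shortest path from $\variableVertex_{i,0}$ to the leaf $\nvariableVertex_{i,k}$. This portion has an odd number of edges and ends in the inverter edge, so by the ``moreover'' part of \autoref{lemma:variableGadget} its error equals that of the inverter edge, namely $\delta$. The total error is thus $\delta + 0 + \delta_h$, and Condition~\ref{condition:weights} yields $|\delta + \delta_h| < 1$, forcing $\delta_h = -\delta$.

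The one step that needs care --- and which I expect to be the main obstacle --- is verifying that each chosen path really is a shortest path of \reductGraph, i.e.\ that crossing the single designated \cvEdge beats every alternative route to $w$. This is exactly where the large weight $\diameter = 5m + 20$ and the diameter bounds established just before the lemma are used: reaching $w$ through any other \cvEdge would require travelling from $x_i$'s \vGadget into a different \vGadget, which is impossible without crossing at least two \cvEdges and hence costs at least $2\diameter$, far more than the single-crossing path. Once this is granted, the remaining work is only the bookkeeping of triangle versus non-triangle edges needed to get the sign of $\delta$ right in each case.
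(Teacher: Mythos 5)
Your proposal is correct and takes essentially the same approach as the paper's proof: in each case you exhibit a shortest path consisting of a variable-gadget portion, the integer-weight \cvEdge, and the handle, and use Condition~\ref{condition:weights} together with \autoref{lemma:variableGadget} to force the two half-integer rounding errors to cancel. The only difference is cosmetic: in the inverter case the paper uses the minimal three-edge path (inverter edge, \cvEdge, handle) rather than your longer path starting at $\variableVertex_{i,0}$, which also works but makes the shortest-path verification slightly heavier.
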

\begin{proof}
Consider the $t$-th literal \literal of clause $C_j$ ($t \in \{1,2,3\}$).
If $\literal = x_i$, then a \cvEdge connects $\clauseVertex_{j,t}$ to a base vertex of the \vGadget for~$x_i$.
Now consider the path that consists of a triangle edge incident to this base vertex (with weight 2.5), the \cvEdge (with integer weight $D$), and $\edgeOf{\clauseVertex_{j,t}}$ (with weight 2.5).
This path has length $D+5$ and is a shortest path, since any other simple path between the same end points would have to make a detour in the \vGadget and have length at least $D+7.5$, or lead over at least three other \cvEdges and have length at least $3D$.
Because the path has integer length, in any \rounding[1] it must have the same length.
It follows that \edgeOf{\clauseVertex_{j,t}} must be rounded in the opposite way as compared to the triangle edge in the \vGadget, which, by \autoref{lemma:variableGadget}, implies that \edgeOf{\clauseVertex_{j,t}} is rounded like the non-triangle edge \edgeOf{\variableVertex_{i,0}}.

Otherwise, $\literal = \neg x_i$ and a \cvEdge connects $\clauseVertex_{j,t}$ to an inverter. Now we consider the path that consists of the non-triangle edge incident on the inverter in the \vGadget, the \cvEdge, and $\edgeOf{\clauseVertex_{j,t}}$. Again, the last edge must be rounded in the opposite way as compared to the first, which by \autoref{lemma:variableGadget}, is rounded in the same way as \edgeOf{\variableVertex_{i,0}}.
\end{proof}

Finally, to ensure that the shortest path between any pair of vertices of \reductGraph is unique (and easy to analyse), we add \emph{\scEdges} according to the following rule. If $u$ and $v$ are vertices of \reductGraph such that one of the following conditions holds:
\begin{enumerate}[label=(\roman*)]
\item $u$ and $v$ belong to different \vGadgets;
\item $u$ and $v$ belong to different \cGadgets;
\item $u$ belongs to a \vGadget for variable $x_i$ and $v$ belongs to a \cGadget for clause $C_j$ and neither $x_i$ nor $\neg x_i$ appears in $C_j$;
\end{enumerate}
then we include an edge $\lbrace u, v \rbrace$ in \reductGraph with weight $2\diameter$.
\begin{lemma}[shortest path via \scEdge]\label{lemma:shortcutEdge}
Let $u$ and $v$ be vertices of \reductGraph that are directly connected by a \scEdge, and let \roundedWeights be a \rounding[1] on \reductGraph.
Then, the \scEdge $\lbrace u, v \rbrace$ is the unique shortest path in \reductGraph with respect to \weights and \roundedWeights.
\end{lemma}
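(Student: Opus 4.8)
The plan is to establish the two claims in turn --- uniqueness with respect to \weights, and then with respect to \roundedWeights --- by a counting argument over the ``long'' edges of \reductGraph. First I would classify the edges into \emph{bridge edges}, namely the \cvEdges (weight \diameter) and \scEdges (weight $2\diameter$), and \emph{internal edges}, those lying inside a single \vGadget or \cGadget, each of weight at most~$6$. Since $\diameter = 5m+20$ exceeds every internal weight, any $u$--$v$ path that leaves the part of \reductGraph containing $u$ must use bridge edges to cross between gadgets.

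For uniqueness with respect to \weights, I would first note that in each of the three defining cases of a \scEdge, the vertices $u$ and $v$ lie in two parts of \reductGraph joined by no \cvEdge; hence a path using a single \cvEdge lands in the ``wrong'' gadget and can never connect $u$ to $v$. A short case analysis over the three \scEdge rules then shows that every $u$--$v$ path other than the direct \scEdge either contains at least one \scEdge, or contains at least two \cvEdges, so that its bridge edges alone contribute weight at least $2\diameter$. The crucial structural fact is that every base vertex, every inverter, and every knob is incident to exactly one \cvEdge, so two \cvEdges can never be consecutive on a path and are always separated by an internal edge. Consequently any alternative path has weight at least $2\diameter + 2.5$, strictly larger than $2\diameter = \weights(\lbrace u,v\rbrace)$, which proves that the \scEdge is the unique shortest $u$--$v$ path with respect to \weights.

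It remains to transfer this to \roundedWeights, and this is the step I expect to be the main obstacle, since a priori a \rounding[1] could round a long edge down substantially. Being the unique shortest $u$--$v$ path in \reductGraph, the \scEdge $\lbrace u,v\rbrace$ is a shortest path of integer weight $2\diameter$, so Condition~\ref{condition:weights} of a \rounding[1] forces $\roundedWeights(\lbrace u,v\rbrace) = 2\diameter$. To control the competing paths I would observe that the very same case analysis shows that \emph{each} \cvEdge is itself the unique shortest path between its own two endpoints, of integer weight \diameter (here the assumption that each variable occurs at most once per clause, together with \scEdge rule~(iii), rules out any shorter detour), and likewise each \scEdge is the unique shortest path between its endpoints. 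Thus every bridge edge is an integer-weighted shortest path and keeps its weight, giving $\roundedWeights = \diameter$ on \cvEdges and $\roundedWeights = 2\diameter$ on \scEdges, while every internal edge is also a shortest path between its endpoints and therefore rounds to an integer at least~$2$. Feeding these exact values back into the counting of the previous paragraph, any $u$--$v$ path other than $\lbrace u,v\rbrace$ has $\roundedWeights$-weight at least $2\diameter + 2 > 2\diameter$, so $\lbrace u,v\rbrace$ remains the unique shortest path after rounding as well.
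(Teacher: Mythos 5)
Your proposal is correct and follows essentially the same route as the paper's proof: the same dichotomy that any alternative $u$--$v$ path must contain either another \scEdge or at least two \cvEdges (using that no vertex is incident on more than one \cvEdge), plus at least one further edge, giving weight at least $2\diameter + 2$ both before and after rounding. The only difference is one of explicitness: where the paper simply asserts the bound holds ``before and after rounding,'' you justify it by observing that every bridge edge and internal edge is itself a shortest path between its endpoints, so Condition~\ref{condition:weights} pins integer-weight edges to their exact values and internal edges to at least $2$ --- a worthwhile clarification, but not a different argument.
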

\begin{proof}
By construction, $u$ and $v$ belong to different gadgets and there is no \cvEdge between these gadgets.
Therefore, any path from $u$ to $v$ other than the direct \scEdge $\lbrace u, v \rbrace$, must use (a) another \scEdge of weight $2\diameter$ plus at least one other edge of weight at least 2.5, or (b) at least two \cvEdges of weight $\diameter$ each, plus at least one other edge of weight at least 2.5 (because no vertex is incident on more than one \cvEdge).
In both cases, before and after rounding, the total weight of the path would be at least 2 more than the weight of $\{u,v\}$, which is $2\diameter$. Therefore, $\{u,v\}$ is the unique shortest path in \reductGraph with respect to both \weights and \roundedWeights.
\end{proof}
Note that, just like \cvEdges, the \scEdges do not invalidate Lemmas \ref{lemma:variableGadget} and~\ref{lemma:clauseGadget}. They do not invalidate \autoref{lemma:clauseHandles} either, as its proof hinges on shortest paths of length $\diameter+5 < 2\diameter-2$.
An example for the construction is given in \autoref{figure:reductGraph}.
\begin{figure}
\centering
\includegraphics[width=.9\textwidth]{./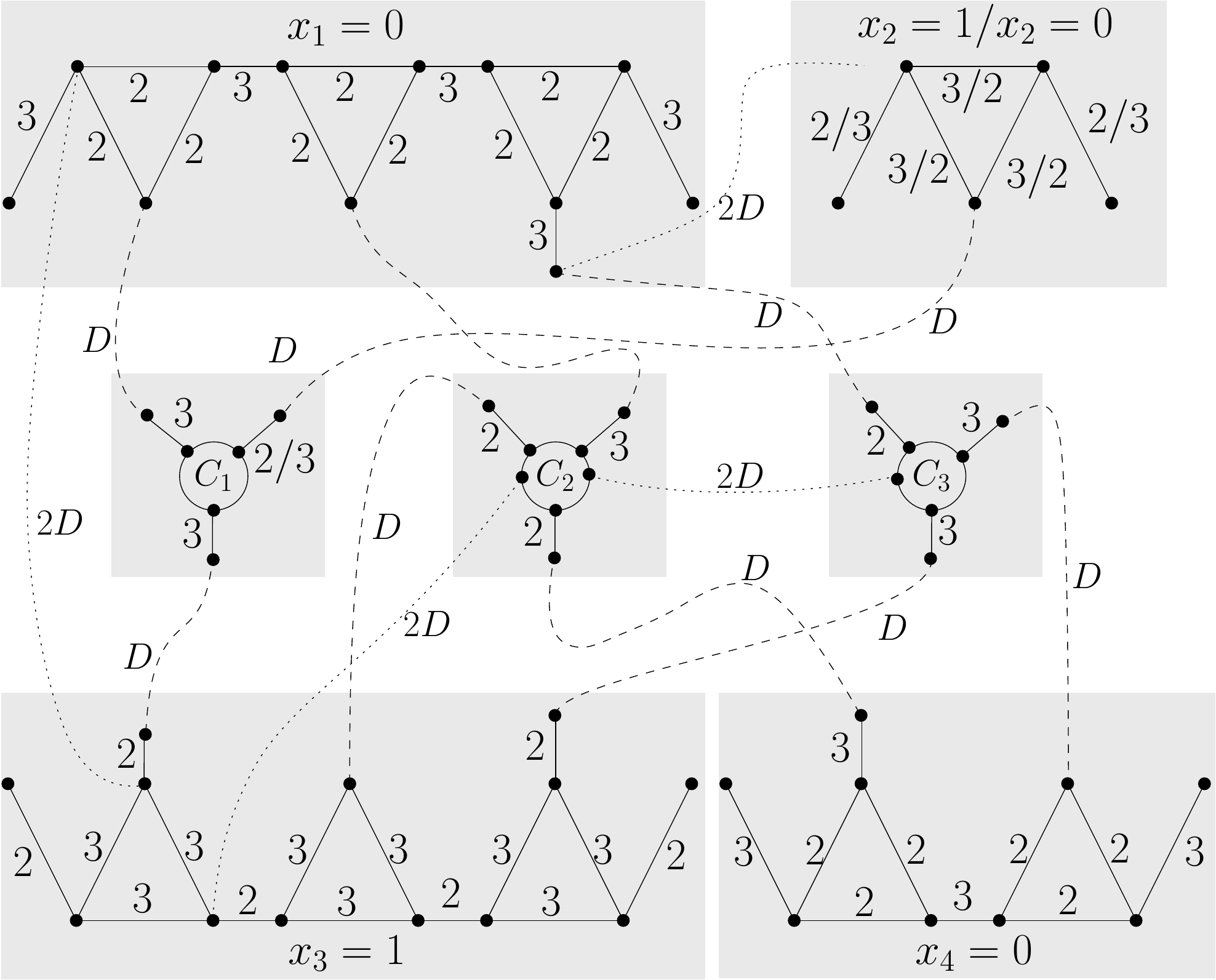}
\caption{
A sketch of \reductGraph for $\formula = (x_1 \vee x_2 \vee \neg x_3) \wedge (x_1 \vee x_3 \vee \neg x_4) \wedge (\neg x_1 \vee \neg x_3 \vee x_4)$.
Grey bounding boxes mark gadgets, \cvEdges are dashed and have weight $\diameter = 5\cdot 3 + 20 = 35$, \scEdges (mostly omitted) are dotted and have weight $2\diameter$.
The weights of the edges in \reductGraph have been rounded according to the assignment of the variables, which are given in the boxes of the corresponding \vGadgets.
For $x_2$, both assignments are given, which affects the rounded weight of several edges.
The rounded weight of each of these edges is given by $a / b$, where $a$ corresponds to the assignment $x_2 = 1$ and $b$ corresponds to $x_2 = 0$.
Note that for $x_1=0,x_3=1,x_4=0$, if we set $x_2 =1$, we can obtain a \rounding[1] for the gadget of $C_1$ and for \reductGraph.
If we set $x_2 = 0$, then there is no \rounding[1] for the gadget of $C_1$, as all of its handles are rounded up.}
\label{figure:reductGraph}
\end{figure}

\begin{theorem}
it is \NP-hard to decide, given an edge-weighted graph \graph and an error tolerance $\eps$, whether \graph admits (1) a path-oblivious \rounding; (2) a weak \rounding; (3) a strong \rounding.
\end{theorem}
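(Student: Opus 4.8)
The plan is to show that the construction $\reductGraph$ realizes the equivalence announced at the start of the section: $\reductGraph$ admits a strong \rounding[1] if $\formula$ is satisfiable, and does not admit even a path-oblivious \rounding[1] if $\formula$ is unsatisfiable. Since every strong \rounding[1] is weak and every weak one is path-oblivious, this single pair of implications settles all three decision variants simultaneously. Combined with the fact that $\reductGraph$ has polynomially many vertices and edges and is computable in polynomial time (each weight is fixed by a simple rule and the gadget count is $O(n+m)$), this establishes the claimed \NP-hardness by reduction from \threeSAT.

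First I would treat the forward direction. Given a satisfying assignment, I identify the value $x_i = \true$ with the rounding of the \vGadget for $x_i$ in which \edgeOf{\variableVertex_{i,0}} is rounded down (equivalently, all triangle edges up), and $x_i = \false$ with the opposite rounding; by \autoref{lemma:variableGadget} these are exactly the two admissible choices. Chasing the correspondence through \autoref{lemma:clauseHandles} shows that a literal is satisfied precisely when the handle it controls is rounded down: a positive literal $x_i$ is satisfied exactly when \edgeOf{\variableVertex_{i,0}}, and hence its equally-rounded handle, is rounded down, while a negated literal $\neg x_i$ is satisfied exactly when \edgeOf{\variableVertex_{i,0}} is rounded up and hence, by inversion, its handle is rounded down. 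As the assignment satisfies every clause, each \cGadget receives at least one down-rounded handle, so by \autoref{lemma:clauseGadget} it admits a strong \rounding[1] consistent with those handles. The \cvEdges (weight $\diameter$), the \scEdges (weight $2\diameter$), and the central edges (weight $6$) are already integral and incur no rounding error.

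It then remains to verify the three global conditions, which I would do by a case distinction over the shortest paths of \reductGraph. By \autoref{lemma:shortcutEdge} every \scEdge is the unique shortest path between its endpoints, with integral weight and thus zero error. Paths lying entirely inside a single gadget are governed by Lemmas~\ref{lemma:variableGadget} and~\ref{lemma:clauseGadget}, which (as argued in the paragraphs following their statements) remain valid in the full graph \reductGraph. The only remaining shortest paths connect a \vGadget to a \cGadget through the single \cvEdge between them; here the integral \cvEdge contributes no error, so the total rounding error equals the sum of the errors of a shortest sub-path inside the \vGadget and a shortest sub-path inside the \cGadget, each controlled by its gadget lemma. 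Bounding this sum strictly below $1$ in absolute value, and confirming that such a path is the unique shortest path both before and after rounding, is the main obstacle of the proof: it is the one place where error contributions from two distinct gadgets must be combined rather than analysed in isolation.

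Finally, the reverse direction is short. Given any path-oblivious \rounding[1] of \reductGraph, its restriction to each \vGadget must be one of the two states of \autoref{lemma:variableGadget}, which defines a truth assignment, and \autoref{lemma:clauseHandles} fixes the handle roundings accordingly. If $\formula$ were unsatisfiable, some clause would be false under this assignment, so by the correspondence above all three of its handles would be rounded up; but then, since the internal shortest paths of that \cGadget are genuine shortest paths of \reductGraph, \autoref{lemma:clauseGadget} would be violated on Condition~\ref{condition:weights}, a contradiction. Hence $\formula$ must be satisfiable whenever \reductGraph admits even a path-oblivious \rounding[1], which completes the reduction and the proof.
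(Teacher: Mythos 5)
Your overall plan is the same as the paper's (same reduction, same lemmas, same two implications), and your reverse direction and polynomiality remarks are sound. The problem is in the forward direction, at exactly the step you yourself label ``the main obstacle'': the error analysis for shortest paths running from a \vGadget through a \cvEdge into a \cGadget. You propose to bound the error as the sum of the errors of the two sub-paths, ``each controlled by its gadget lemma,'' but that sum is \emph{not} strictly below $1$ in general: \autoref{lemma:variableGadget} allows an error of $\pm 0.5$ on the variable-gadget part, and \autoref{lemma:clauseGadget} only guarantees an error in $(-1,1)$ (in the construction it can be as large as $0.9$) on the clause-gadget part, so the naive sum can reach $1.4$, violating Condition~\ref{condition:weights}. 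The paper closes this gap with a cancellation argument that your proposal never supplies: by the second claim of \autoref{lemma:variableGadget}, the error on $\shortestPath{u,s}$ is either zero or exactly the error of its last edge incident on $s$; by \autoref{lemma:clauseHandles}, the handle $\edgeOf{\clauseVertex_{j,t}} = \{c,c'\}$ is rounded the opposite way to that edge; hence either $\shortestPath{u,c}$ or $\shortestPath{u,c'}$ has rounding error \emph{exactly zero} (the $\pm 0.5$ errors cancel, and the \cvEdge is integral), and only the remaining piece inside the \cGadget, with error in $(-1,1)$, contributes. Without this cancellation the bound you need simply does not follow from the statements you cite.

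A second omission in the same case is the uniqueness of this shortest path before and after rounding, which is what actually yields Conditions~\ref{condition:weakmaintenance} and~\ref{condition:strongmaintenance}; you say it must be ``confirmed'' but give no argument. The paper does it quantitatively: the path through the unique \cvEdge has weight at most $5m + \diameter + 15.8 = 2\diameter - 4.2$, hence rounded weight less than $3\diameter - 1$, while any $u$--$v$ path avoiding that \cvEdge must use at least two edges of weight $\geq \diameter$ including a \scEdge of weight $2\diameter$, or at least three \cvEdges, giving weight at least $3\diameter$ both before and after rounding. These two arguments are the technical core of the theorem; as written, your proposal identifies where they are needed but does not prove them, so the forward direction is incomplete.
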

\begin{proof}
\HH{We should check if we should insert some parts of the concise proof sketch in the IWOCA version to give more/faster intuition and/or clarify the structure of the proof.}
We prove that if \formula is satisfiable, \reductGraph admits a strong \rounding[1].
Moreover, if \reductGraph admits a path-oblivious \rounding[1], then \formula is satisfiable.

To start with, we show how to obtain a strong \rounding[1] for \reductGraph if \formula is satisfiable.
Let \mapping be an assignment of values to the variables that satisfies \formula.
So we have $\mapping: \lbrace x_1, \ldots, x_n \rbrace \to \lbrace 0, 1 \rbrace$, where $0$ denotes the logical value \false and $1$ denotes \true; correspondingly, $\mapping(\neg x_i) = 1 - \mapping(x_i)$. Then, a strong \rounding[1] \roundedWeights for \reductGraph can be constructed in the following way:
\begin{itemize}
\item 
edges with integer weight keep their weight;
\item
in a \vGadget for $x_i$:
if $\mapping(x_i) = 0$, all triangle edges are rounded down and the non-triangle edges (including \edgeOf{\variableVertex_{i,0}}) are rounded up;
if $\mapping(x_i) = 1$, all triangle edges are rounded up and the non-triangle edges (including \edgeOf{\variableVertex_{i,0}}) are rounded down;
\item 
in a \cGadget for clause $C_j$ with literals $l_1,l_2,l_3$: for $t \in \{1,2,3\}$, if $\mapping(l_t) = 0$, the weight of $\edgeOf{\clauseVertex_{j,t}}$ is rounded up; if $\mapping(l_t) = 1$, the weight of $\edgeOf{\clauseVertex_{j,t}}$ is rounded down. Observe that this implies that if $l_t = x_i$, then $\edgeOf{\clauseVertex_{j,t}}$ is rounded like the non-triangle edges in the \vGadget for $x_i$, and if $l_t = \neg x_i$, then $\edgeOf{\clauseVertex_{j,t}}$ is rounded like the triangle edges.
Since $\formula$ is satisfied, also $C_j$ is satisfied and at least one of the edges \edgeOf{\clauseVertex_{j,1}}, \edgeOf{\clauseVertex_{j,2}}, \edgeOf{\clauseVertex_{j,3}} is rounded down.
By \autoref{lemma:clauseGadget}, we can complete the rounding of the gadget to one of those given in \autoref{figure:clauseGadget:rounded} (modulo rotation).
That is, for each pair of edges from $\edgeOf{\clauseVertex_{j,1}}, \edgeOf{\clauseVertex_{j,2}}$, and $\edgeOf{\clauseVertex_{j,3}}$, if both of them are rounded up, we round the three edges between them along the nonagon to 3,~4,~3; in all other cases, we round the three edges between them along the nonagon to 4,~3,~4.
\end{itemize}
Certainly, the weight of every edge has been rounded either up or down and is now integer.
It remains to prove that Conditions~\ref{condition:weights}, \ref{condition:weakmaintenance} and \ref{condition:strongmaintenance} of a strong \rounding[1] are fulfilled for the shortest paths between any pair of vertices $u$ and $v$ in $\reductGraph$.
Let $u, v$ be vertices of \reductGraph.
If \reductGraph contains a \scEdge $\lbrace u, v \rbrace$, \autoref{lemma:shortcutEdge} applies and the conditions hold.
Otherwise, that is, if $u$ and $v$ are not connected by a \scEdge, we have to distinguish two cases:
\begin{enumerate}[label=(\roman*)]
\item $u$ and $v$ lie in the same gadget;\label{proof:case:sameGadget}
\item $u$ lies in a \vGadget for a variable $x_i$ and $v$ in a \cGadget for a clause $C_j$ and either $x_i$ or $\neg x_i$ appears in $C_j$.\label{proof:case:linkedGadgets}
\end{enumerate}

\ref{proof:case:sameGadget}
For each gadget, we constructed a strong \rounding[1] according to \autoref{lemma:variableGadget} or \ref{lemma:clauseGadget}.
As observed above, the lemmata continue to hold after adding \cvEdges and \scEdges.
Thus, if $u$ and $v$ lie in the same gadget, Conditions~\ref{condition:weights}, \ref{condition:weakmaintenance}, \ref{condition:strongmaintenance} are satisfied with respect to shortest paths between $u$ and $v$.

\ref{proof:case:linkedGadgets}
Since each variable occurs at most once in any clause, the gadgets of $u$ and $v$ are, by construction, connected by exactly one \cvEdge $\{s,c\}$, where $s$ is a base or inverter vertex in the \vGadget, and $c = \clauseVertex_{j, t}$ is a handle of the \cGadget, for some $t$.

\DK{We could use a figure here, too}
We start with verifying Condition~\ref{condition:weights} of the \rounding[1] for a shortest path \shortestPath{u, v} from $u$ to $v$ that uses the \cvEdge $\{s, c\}$.
For any two vertices $a$ and $b$ on \shortestPath{u, v}, let \shortestPath{a, b} be the subpath of \shortestPath{u, v} from $a$ to $b$.
If $C_j$ includes the positive literal~$x_i$, then $s$ is a base vertex of the \vGadget, by construction.
Consider the unique shortest path \shortestPath{u,s} from $u$ to $s$ in the \vGadget.
By \autoref{lemma:variableGadget}, the rounding error on \shortestPath{u,s} is either zero or equal to the rounding error on any triangle edge (which is either $-0.5$ or 0.5).
Adding $\{s,c\}$ to the path does not change the rounding error, as it has integer weight.
If $v = c$, we are done now.
Otherwise, the subpath \shortestPath{c,v} from $c$ to $v$ in the \cGadget uses the edge $\edgeOf{\clauseVertex_{j,t}} = \{c, c'\}$.
By \autoref{lemma:clauseHandles}, \edgeOf{\clauseVertex_{j,t}} is rounded in the same way as the non-triangle edges in the \vGadget.
It follows that the rounding error of either \shortestPath{u, c} or \shortestPath{u, c'} is zero.
The remaining part of the path from $u$ to $v$, that is \shortestPath{c, v} or \shortestPath{c', v}, lies entirely inside the \cGadget and is the unique shortest path to $v$ from $c$ or $c'$, respectively, with rounding error more than $-1$ and less than 1, by \autoref{lemma:clauseGadget}.
Hence, Condition~\ref{condition:weights} is satisfied with respect to $\shortestPath{u, v}$.
If $C_j$ includes the negative literal $\neg x_i$, then $s$ is an inverter vertex, and the whole argument goes through with the roles of triangle and non-triangle edges swapped.

The weight of \shortestPath{u,v} with respect to \weights is bounded by $\weights(\shortestPath{u,v}) = \weights(\shortestPath{u,s}) + \diameter + \weights(\shortestPath{c, v}) \leq 5m + \diameter + 15.8 = 2\diameter - 4.2 \leq 3\diameter - 2$. Since Condition~\ref{condition:weights} is satisfied, the weight with respect to \roundedWeights is therefore less than $3\diameter - 1$.
This bound rules out other paths that do not use $\{s, c\}$:
any other path from $u$ to $v$ that avoids the \cvEdge $\{s,c\}$ must pass by (a) a single other gadget $g$, or (b) at least two other gadgets.
In case (a), the path must use at least two \cvEdges or \scEdges, each of weight at least $\diameter$. In fact, at least one of these edges must be a \scEdge of weight $2D$, because a path from $u$ with two \cvEdges and no \scEdges, could only end in a variable gadget.
In case (b), the path uses at least three \cvEdges or \scEdges, each of weight at least $\diameter$.
In both cases (a) and (b), the total weight of the path would be at least $3\diameter$.
However, such a path cannot be shortest, neither before nor after rounding.

Since the shortest paths within the gadgets are maintained by Lemma~\ref{lemma:variableGadget} and~\ref{lemma:clauseGadget}, it follows that also Conditions \ref{condition:weakmaintenance} and \ref{condition:strongmaintenance} are satisfied with respect to $u$ and $v$, that is, the shortest path between $u$ and $v$ is identical with respect to $\weights$ and $\roundedWeights$.

This completes the proof that if $\formula$ is satisfiable, then \reductGraph admits a strong \rounding[1]. Now we still need to prove that if \reductGraph admits a path-oblivious \rounding[1], then \formula is satisfiable. This we do by constructing, from a given path-oblivious \rounding[1], a choice \mapping of the variables that satisfies \formula.

Assume we are given a \rounding[1] \roundedWeights of \reductGraph.
If in \roundedWeights, the weight of \edgeOf{\variableVertex_{i,0}} is rounded down, we set $\mapping(x_i) = 1$, otherwise we set $\mapping(x_i) = 0$.
Now consider the \cGadget for any clause $C_j$.
Following \autoref{lemma:clauseGadget}, we know that there is at least one $t \in \{1,2,3\}$ such that \edgeOf{\clauseVertex_{j,t}} is rounded down.
By construction, a \cvEdge connects $\clauseVertex_{j,t}$ to a base vertex or an inverter of the \vGadget for a variable $x_i$: a base vertex if $x_i$ appears as a literal in $C_j$, and an inverter if $\neg x_i$ appears as a literal in $C_j$.
Following \autoref{lemma:clauseHandles}, in the first case \edgeOf{\variableVertex_{i,0}} is rounded in the same way as \edgeOf{\clauseVertex_{j,t}}, that is, down, and thus $\mapping(x_i) = 1$ and the literal $x_i$ makes $C_j$ true.
In the second case, \edgeOf{\variableVertex_{i,0}} is rounded in the opposite way of \edgeOf{\clauseVertex_{j,t}}, that is, up, and thus $\mapping(x_i) = 0$, and the literal $\neg x_i$ makes $C_j$ true. The same argument applies to each clause $C_j$, and thus, $\mapping$ satisfies $\formula$.

To complete the proof, observe that \threeSAT is an \NP-hard problem and
\reductGraph consists of $O(mn)$ vertices and $O(m^2n^2)$ edges and can be constructed in polynomial time.
\end{proof}

We observe that the NP-hardness construction as described works for $\eps = 1$ and some smaller values.
The weight $w$ of a nonagon edge in the clause gadget is deciding.
If the \threeCNF formula is satisfiable, the minimum and maximum rounding errors are $10-3w$ and $4\frac12-w$.
With $w= 3.6$, we obtain a rounding error of $\max (\abs{10-3w}, \abs{4 \frac12 - w}) = 0.9$.
The expression is actually minimized to $7/8$ when we choose $w = 3\frac 58$. 
If we choose $\eps> 1$, \autoref{lemma:variableGadget} will not hold.
Thus, the construction works as long as $7/8 < \eps \leq 1$. \HH{If you make the weights on the nonagon edges in the clause gadget 3.75, 3.25, 3.75, 3.75, 3.25, 3.75, 3.75, 3.25, 3.75,
then I believe the construction actually works for $3/4 < \eps \leq 1$.}

\newcommand{\tree}{\ensuremath{T}\xspace}
\newcommand{\rootOf}[1]{\ensuremath{r}\xspace}
\newcommand{\pathError}[1]{\ensuremath{e(#1)}\xspace}
\newcommand{\errorRange}[2][]{\ensuremath{E#1(#2)}\xspace}
\newcommand{\errorRangeSet}[2][]{\ensuremath{\mathcal{E}#1(#2)}\xspace}
\DontPrintSemicolon
\setlength\algomargin{0em}
\SetAlCapHSkip{0em}
\SetFuncSty{textsc}
\SetProcNameSty{textsc}
\SetArgSty{textrm}
\SetKwInOut{Input}{input}\SetKwInOut{Output}{output}

\section{A quadratic-time algorithm for trees}
In this section, we will present algorithms for the special case in which the graph is a tree. Note that in this case, there is only one simple path between any pair of vertices, so there is no difference between path-oblivious, weak, and strong \roundings.

Clearly, if the whole graph is a simple path with edges $e_1,...,e_n$, a 1-rounding always exists, and can be computed in linear time (assuming the floor function can be computed in constant time). For example~\cite{SadakaneEtAl2005,Storandt2018}, let $d_i$ be $\frac12 + \sum_{j=1}^i \weights(e_i)$; then we set $\roundedWeights(e_i) = \lfloor d_i \rfloor - \lfloor d_{i-1} \rfloor$. Now, for any subpath $e_a,...,e_z$, we have
$\sum_{i=a}^z \roundedWeights(e_i) = \lfloor d_z \rfloor - \lfloor d_{a-1} \rfloor < d_z - (d_{a-1} - 1) = 1 + \sum_{i=a}^z \weights(e_i)$, 
and 
$\sum_{i=a}^z \roundedWeights(e_i) = \lfloor d_z \rfloor - \lfloor d_{a-1} \rfloor > (d_z - 1) - d_{a-1} = -1 + \sum_{i=a}^z \weights(e_i)$; thus \roundedWeights satisfies Condition~\ref{condition:weights} for $\eps = 1$, and \roundedWeights is a 1-rounding. Sadakane et al.~\cite{SadakaneEtAl2005} prove that a path of $n$ vertices admits at most $n$ different 1-roundings, and shows how to compute all 1-roundings in $O(n^2)$ time, and how to determine the 1-rounding with the smallest maximum absolute rounding error in the same time.

If the graph is a tree, observe that we can obtain a 2-rounding in linear time as follows. 
Choose any vertex of the tree as the root~$r$.
For any other vertex $u$, let $p(u)$ be the parent of $u$, and let $d_u$ be the (unrounded) weight of the path from $r$ to $u$.
Now we set $\roundedWeights(\{p(u),u\}) = \lfloor d_u \rfloor - \lfloor d_{p(u)}\rfloor$.
By the same calculation as above, for any vertex~$u$, the absolute rounding error $|e(u,v)|$ on any path from $u$ to an ancestor $v$ of~$u$ is now less than one.
Now, given two arbitrary vertices $u$ and $w$, let $v$ be their lowest common ancestor.
The absolute rounding error on the path from $u$ to $w$ is at most $|e(u,v)| + |e(v,w)| < 2$.

We will now present an algorithm that decides, given a tree $\tree$ and an error threshold $\eps < 2$, in quadratic time, whether $\tree$ admits an \rounding.
We choose an arbitrary vertex of $\tree$ as the root $\rootOf{\tree}$. 
We say $v$ is a descendant of $u$ if $u$ lies on the path from $\rootOf{\tree}$ to $v$.
For any vertex $u$, the subtree $\tree_u$ of $\tree$ is the subgraph of $\tree$ that is induced by all descendants of~$u$; this vertex $u$ is called the root of~$\tree_u$; see \autoref{figure:TreeEx}.
By $|\tree|$ we denote the number of vertices of $\tree$. By $\shortestPath{u,v}$ we denote the path in $\tree$ from $u$ to $v$. 

\begin{figure}[tb]
\centering
\includegraphics[scale=0.5]{./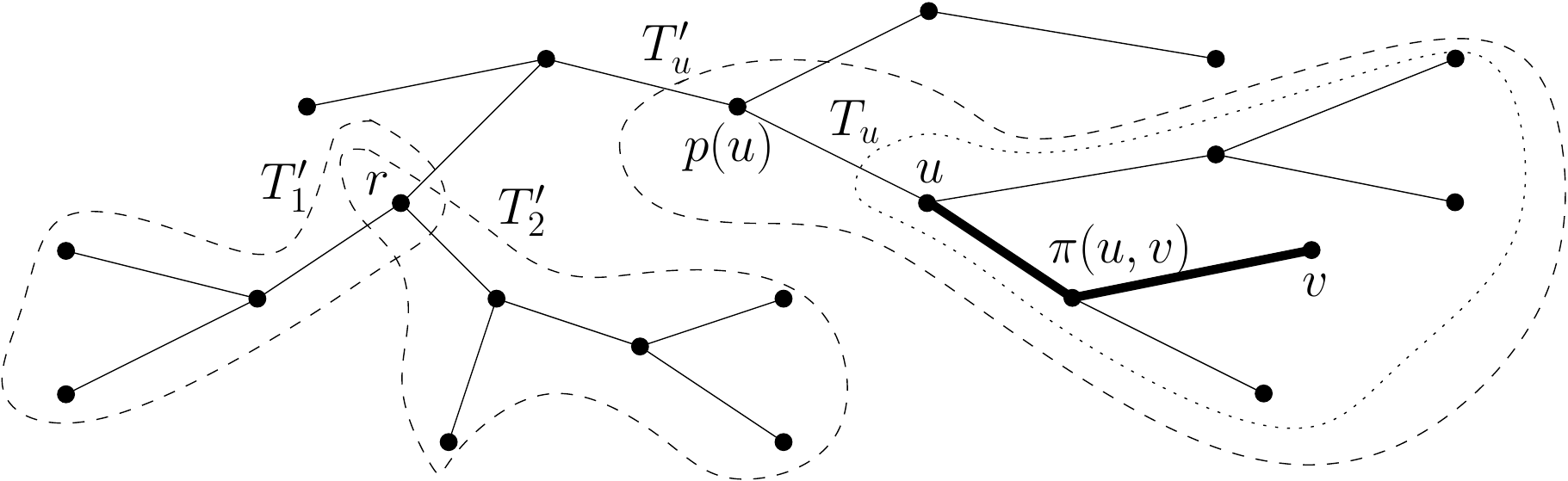}
\caption{A rooted tree $\tree$ with root vertex $r$. The subtree $\tree_u$ with root $u$ can be extended to a subgraph $T_u'$ with root $p(u)$ by adding the edge $\{u,p(u)\}$.
}\label{figure:TreeEx}
\end{figure}

\begin{definition}[root error range]
Let \roundedWeights be an \rounding on a tree \tree with root $\rootOf{\tree}$.
For any $v \in \tree$, let \pathError{r,v} be the rounding error on $\pi(r,v)$, that is, $\pathError{r,v} := \roundedWeights(\shortestPath{\rootOf{\tree},v}) - \weights(\shortestPath{\rootOf{\tree},v})$.

We call the smallest interval that contains the signed rounding errors of the paths from $r$ to all vertices of $\tree$ the \emph{root error range} \errorRange{\tree, \roundedWeights}, so
$$
\errorRange{\tree, \roundedWeights} := \left[ \min\limits_{v\in \tree} \pathError{\rootOf{\tree},v}, \max\limits_{v\in\tree} \pathError{\rootOf{\tree},v} \right] .
$$
Note that $\pathError{\rootOf{\tree},\rootOf{\tree}} = 0$, so if $\tree$ is a leaf, then $\errorRange{\tree, \roundedWeights} = [0, 0]$.

We call a rounding \roundedWeights of \tree \emph{locally optimal} if there is no other rounding $\roundedWeights'$ of \tree such that the corresponding root error range \errorRange{\tree, \roundedWeights'} is smaller than and contained in \errorRange{\tree, \roundedWeights}.

Let the \emph{error range set} \errorRangeSet{\tree} be the set of root error ranges that can be realized by locally optimal \roundings of \tree, that is, the set $\errorRangeSet{\tree} := \lbrace \errorRange{\tree, \roundedWeights} \mid \roundedWeights \text{ is a locally optimal rounding} \rbrace$. 
\end{definition}

\begin{lemma}[error range set size]
$\errorRangeSet{\tree}$ has at most~$2|T|$ elements.
\end{lemma}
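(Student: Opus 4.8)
The plan is to reformulate a rounding in terms of cumulative rounded weights and then count the inclusion-minimal root error ranges directly. On a rooted tree, a rounding $\roundedWeights$ is equivalent to a choice of integer potentials $s_v := \roundedWeights(\shortestPath{r,v})$ for every vertex $v$, subject only to $s_r = 0$ and $s_v \ge s_{p(v)}$ (each rounded edge weight $s_v - s_{p(v)}$ must be a non-negative integer). Writing $\omega_v := \weights(\shortestPath{r,v})$, the signed error is $\pathError{r,v} = s_v - \omega_v$, so the root error range equals $[\min_v (s_v - \omega_v),\ \max_v (s_v - \omega_v)]$. The crucial observation is that $s_v - \omega_v \equiv -\omega_v \pmod 1$: its fractional part $f_v := (-\omega_v) \bmod 1 \in [0,1)$ is fixed by the input, while only its integer part is free (up to monotonicity).

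Next I would record the order structure of $\errorRangeSet{\tree}$. By definition a locally optimal range is one containing no strictly smaller achievable range, so $\errorRangeSet{\tree}$ is the set of achievable ranges that are minimal under inclusion. Two distinct inclusion-minimal intervals $[a,b]$ and $[a',b']$ are incomparable, which forces $a < a' \Rightarrow b < b'$ (otherwise one contains the other) and rules out $a = a'$. Hence the members of $\errorRangeSet{\tree}$ can be listed as $[a_1,b_1],\dots,[a_N,b_N]$ with $a_1 < \dots < a_N$ and $b_1 < \dots < b_N$. In particular $N$ equals the number of distinct left endpoints, and it suffices to bound that number.

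The key step is to confine all endpoints to a single unit window. Since $\pathError{r,r} = 0$, every range satisfies $a_i \le 0 \le b_i$. Now $s_v := \lfloor \omega_v \rfloor$ is a valid rounding: monotonicity holds because $\omega_v$ is non-decreasing from $r$ along every root path, hence so is $\lfloor\omega_v\rfloor$. It realizes the range $[\min_v(\lfloor\omega_v\rfloor - \omega_v),\,0]$, whose left endpoint lies in $(-1,0]$, and which is the unique inclusion-minimal range with right endpoint $0$ (any other range with right endpoint $0$ contains it, and $a_1$ is the largest attainable left endpoint subject to all errors being $\le 0$). Thus it is $[a_1,b_1]$ and $a_1 > -1$; symmetrically, $s_v := \lceil \omega_v \rceil$ shows $b_N < 1$ with $a_N = 0$. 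Combined with the ordering of the antichain, every left endpoint satisfies $a_i \in (-1,0]$ and every right endpoint $b_i \in [0,1)$.

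Finally I would count. The minimum $\min_v(s_v - \omega_v) = a_i$ is attained at some witness vertex $v^\ast$, so $a_i \equiv f_{v^\ast} \pmod 1$; since $a_i \in (-1,0]$ this pins $a_i$ to the single value $f_{v^\ast}-1$ (or $0$ when $f_{v^\ast}=0$). Hence every left endpoint lies in $\{\,f_v - 1 : v \in \tree\,\} \cup \{0\}$, a set of at most $|\tree|$ values, and therefore $N \le |\tree| \le 2|\tree|$. I expect the confinement argument of the third paragraph to be the main obstacle, as it is precisely what excludes an a priori unbounded family of shifted minimal ranges sitting at different integer levels; the monotonicity of $\lfloor\omega_v\rfloor$ and $\lceil\omega_v\rceil$ along root paths is what makes the two extreme roundings feasible and fixes the window.
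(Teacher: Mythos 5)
Your second and fourth paragraphs (the antichain ordering of $\errorRangeSet{\tree}$ and the mod-$1$ pinning of left endpoints via witness vertices) are sound, and they are exactly the ingredients of the paper's proof. The gap is the confinement step in your third paragraph. You claim that $s_v:=\lfloor\omega_v\rfloor$ realizes an achievable root error range, but $\errorRangeSet{\tree}$ ranges only over \roundings, and Condition~\ref{condition:weights} must hold for \emph{every} simple path in the tree, not just for root paths. Floor rounding controls only root-path errors: for vertices $u,w$ whose lowest common ancestor is $z$, its error on $\shortestPath{u,w}$ equals $(\lfloor\omega_u\rfloor-\omega_u)+(\lfloor\omega_w\rfloor-\omega_w)-2(\lfloor\omega_z\rfloor-\omega_z)$, which can be arbitrarily close to $-2$; so for $\eps<2$ the floor rounding need not be an \rounding at all. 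Concretely, take two edges of weight $0.5$ sharing the root and $\eps=1$: floor rounding gives the leaf-to-leaf path error $-1$, which is forbidden, and the only \roundings[1] round one edge up and the other down, so $\errorRangeSet{\tree}=\{[-0.5,0.5]\}$ --- your intermediate claims $b_1=0$ and $a_N=0$ are both false here.

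Worse, the conclusion of the confinement step is itself false, so this part cannot be repaired. Let $\eps=1.5$ and take the tree with root $r$, a child $v$ with $\weights(\{r,v\})=1$, and two children $u,w$ of $v$ with $\weights(\{v,u\})=\weights(\{v,w\})=0.8$. Setting $\roundedWeights(\{r,v\})=0$ and $\roundedWeights(\{v,u\})=\roundedWeights(\{v,w\})=1$ gives path errors $-1,\,-0.8,\,-0.8,\,0.2,\,0.2,\,0.4$, all of absolute value at most $1<1.5$, hence a \rounding[1.5] with root error range $[-1,0]$. It is locally optimal: a strictly contained achievable range would need all root-path errors in $(-1,0]$, which forces $\roundedWeights(\{r,v\})=1$ and $\roundedWeights(\{v,u\})=\roundedWeights(\{v,w\})=0$, but then the path from $u$ to $w$ has error $-1.6$, which is not allowed. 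So $[-1,0]\in\errorRangeSet{\tree}$: a locally optimal left endpoint can equal $-1$, i.e., can stem from a root path rounded to $\lfloor\weights(\shortestPath{r,v})\rfloor-1$, and your one-candidate-per-vertex set misses it (for this $v$ your candidate is $0$). This is precisely why the paper's proof admits \emph{two} candidate values per vertex --- rounded path weight $\lfloor\weights(\shortestPath{r,v})\rfloor$ or $\lfloor\weights(\shortestPath{r,v})\rfloor-1$, both possible because $\eps<2$ and the lower bound is at most $0$ --- and then applies the same antichain argument you use, yielding the claimed bound of $2|\tree|$ rather than $|\tree|$. Your proof becomes correct (and essentially identical to the paper's) if you drop the confinement and let each witness vertex contribute both residue representatives in $(-2,0]$.
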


\begin{proof}
Observe that in any \rounding with $\eps < 2$, the weight of any path $\shortestPath{\rootOf{\tree},v}$ is rounded to $\lfloor\weights(\shortestPath{\rootOf{\tree},v})\rfloor-1$, $\lfloor\weights(\shortestPath{\rootOf{\tree},v})\rfloor$, $\lceil\weights(\shortestPath{\rootOf{\tree},v})\rceil$, or $\lceil\weights(\shortestPath{\rootOf{\tree},v})\rceil+1$. 

Any root error range $\errorRange{\tree,\roundedWeights}$ includes 0, since $\pathError{\rootOf{\tree},\rootOf{\tree}} = 0$. Therefore, the lower bound of any root error range $\errorRange{\tree,\roundedWeights}$ is zero or negative, and it must be the rounding error on some path $\shortestPath{\rootOf{\tree},v}$ whose rounded weight is $\lfloor\weights(\shortestPath{\rootOf{\tree},v})\rfloor$ or $\lfloor\weights(\shortestPath{\rootOf{\tree},v})\rfloor - 1$. Since $v$ must be one of the $n$ vertices of $T$, this implies that there are at most $2n$ possible values for the lower bound of any root error range.

Because \errorRangeSet{\tree} contains only root error ranges of locally optimal \roundings, no two elements of \errorRangeSet{\tree} can have the same lower bound, so the total number of elements of \errorRangeSet{\tree} is also bounded by $2n$.
\end{proof}

Our algorithm will compute the error range set for every subtree of \tree bottom-up. For this purpose, we need two subalgorithms. The first subalgorithm (explained in the proof of Lemma~\ref{lem:addparentedge}) adds, to a given subtree, the edge that connects the root to its parent in $\tree$. The second subalgorithm (explained in the proof of Lemma~\ref{lem:merge}) combines two such parent-added subtrees who have a common parent. In the description of these algorithms, we assume that error range sets are sorted in ascending order by the lower bounds of the root error ranges. Since error range sets contain only root error ranges of locally optimal roundings, no element of an error range set can be contained in another. Therefore, the fact that the error range sets are sorted by ascending lower bound, implies that they are also sorted by ascending upper bound.

Let $\tree'_u$ be the subgraph of $\tree$ that consists of $\tree_u$ and the edge between $u$ and its parent $p(u)$ in $\tree$; we choose $p(u)$ as the root of $\tree'_u$, see \autoref{figure:TreeEx}.

\begin{lemma}[moving up]\label{lem:addparentedge}
Given \errorRangeSet{\tree_u}, we can compute \errorRangeSet{\tree'_u} in $O(|\tree_u|)$ time.
\end{lemma}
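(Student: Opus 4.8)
The plan is to exploit the following correspondence. Any rounding of $\tree'_u$ restricts to a rounding of $\tree_u$, and conversely any rounding of $\tree_u$ together with a choice of integer weight for the edge $\{p(u),u\}$ determines a rounding of $\tree'_u$. Writing $\delta := \roundedWeights(\{p(u),u\}) - \weights(\{p(u),u\})$ for the rounding error on the new edge, the path from the new root $p(u)$ to any $v \in \tree_u$ has rounding error $\delta + \pathError{u,v}$, while the path from $p(u)$ to itself has error $0$. Hence the root error range of $\tree'_u$ is the smallest interval containing $0$ and $\delta + \errorRange{\tree_u, \roundedWeights}$; I denote this interval by $\mathrm{hull}(\{0\} \cup (\delta + \errorRange{\tree_u,\roundedWeights}))$. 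The whole proof reduces to analysing this shift-and-hull operation.

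First I would prove a monotonicity statement: if $E' \subseteq E$ are two realizable root error ranges of $\tree_u$, then $\mathrm{hull}(\{0\}\cup(\delta + E')) \subseteq \mathrm{hull}(\{0\}\cup(\delta+E))$ for every fixed $\delta$, since both shifting by a constant and taking the hull with a fixed point preserve inclusion. This lets me restrict attention to \emph{locally optimal} roundings of $\tree_u$: if the restriction of a locally optimal rounding of $\tree'_u$ to $\tree_u$ were not locally optimal, I could replace it by a locally optimal rounding realizing a properly smaller range from $\errorRangeSet{\tree_u}$, keep the same $\delta$, and obtain a root error range of $\tree'_u$ contained in the original one, contradicting local optimality unless the two coincide. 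Consequently every element of $\errorRangeSet{\tree'_u}$ equals $\mathrm{hull}(\{0\} \cup (\delta + E))$ for some $E \in \errorRangeSet{\tree_u}$ and some admissible $\delta$, and conversely every such candidate that is minimal under inclusion lies in $\errorRangeSet{\tree'_u}$.

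This suggests the algorithm. For each $E \in \errorRangeSet{\tree_u}$ I enumerate the candidate ranges obtained by shifting $E$ by each admissible edge-rounding error $\delta$ and taking the hull with $0$. Because the path $p(u)\to u$ must have error in $(-\eps,\eps) \subseteq (-2,2)$, the edge may be rounded only to the $O(1)$ integers in an interval of length less than $2\eps$, so there are $O(1)$ choices of $\delta$ and thus $O(|\errorRangeSet{\tree_u}|) = O(|\tree_u|)$ candidates in total; I discard any candidate whose range is not contained in $(-\eps,\eps)$, as it cannot arise from an \rounding. I then keep exactly the candidates that are minimal under inclusion, which by the characterization above are precisely the elements of $\errorRangeSet{\tree'_u}$ (at most $2|\tree'_u|$ of them, by the error-range-set-size lemma).

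For the running time I would use that $\errorRangeSet{\tree_u}$ is given sorted by ascending lower bound, hence also by ascending upper bound. For each fixed $\delta$, the maps $[a,b]\mapsto\min(0,\delta+a)$ and $[a,b]\mapsto\max(0,\delta+b)$ are monotone, so the candidates produced for that $\delta$ are already sorted by both endpoints; merging the $O(1)$ sorted candidate lists takes linear time. Removing the non-minimal intervals is a single linear sweep that keeps, in order of decreasing lower bound, only the intervals whose upper bound is strictly smaller than every upper bound seen so far. The total work is $O(|\tree_u|)$. The main obstacle is the correctness argument of the second paragraph: justifying precisely that it suffices to start from $\errorRangeSet{\tree_u}$ rather than from all roundings of $\tree_u$, and that the minimal admissible candidates are exactly $\errorRangeSet{\tree'_u}$. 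This is where the monotonicity of the shift-and-hull operation and the definition of local optimality must be combined carefully, including the bookkeeping that distinguishes proper from improper inclusion.
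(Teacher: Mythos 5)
Your proposal is correct and follows essentially the same route as the paper's proof: decompose each rounding of $\tree'_u$ into a rounding of $\tree_u$ plus one of $O(1)$ admissible integer choices for $\{p(u),u\}$ (the paper uses $k\in\{-1,0,1,2\}$), form the candidate range $[\min(a+k-f,0),\max(0,b+k-f)]$ subject to $-\eps<a+k-f$ and $b+k-f<\eps$, exploit sortedness and monotonicity to merge the $O(1)$ sorted candidate lists, and filter out non-minimal intervals in a linear sweep. Your explicit monotonicity argument for why it suffices to start from $\errorRangeSet{\tree_u}$ rather than from all roundings of $\tree_u$ is a step the paper leaves implicit, so it is a welcome addition rather than a deviation.
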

\begin{proof}
Let $f$ be the fractional part of the weight of $\{p(u),u\}$, that is, $f := \weights(\{p(u),u\}) - \lfloor\weights(\{p(u),u\})\rfloor$.
Any \rounding for $\tree'_u$ must consist of an \rounding for $\tree_u$ combined with setting $\roundedWeights(\{p(u),u\})$ to $\lfloor\weights(\{p(u),u)\}\rfloor + k$ for some $k \in \{-1,0,1,2\}$ (because $\eps < 2$, no other values for $\roundedWeights(\{p(u),u\})$ are allowed).
For any vertex $v \in \tree'_u$, other than the root $p(u)$, we have $\pathError{p(u),v} = \pathError{p(u),u} + \pathError{u,v} = k-f + \pathError{u,v}$; for the root $p(u)$ we have $\pathError{p(u),p(u)} = 0$. Thus, a choice of an \rounding for $T_u$ with root error range $[a,b] \in \errorRangeSet{\tree_u}$, together with a choice of $k \in \{-1,0,1,2\}$, results in a rounding for $T'_u$ whose root error range is the smallest interval that includes $\left[a+k-f, b+k-f\right]$ and~$0$, that is, the root error range for $T'_u$ is $\left[\min(a+k-f,0),\max(0,b+k-f)\right]$.
This rounding is an \rounding if and only if $-\eps < a+k-f$ and $b+k-f < \eps$.

Thus, the elements of $\errorRangeSet{\tree'_u}$ are all from the set:
\begin{eqnarray*}
S = & \left\lbrace 	\left[\min(a+k-f,0),\max(0,b+k-f)\right]
~\left|~
\begin{tabular}{c}
$[a,b] \in \errorRangeSet{\tree_u},$\\
$k \in \{-1,0,1,2\},$ \\
$-\eps < a+k-f,$ \\
$b+k-f < \eps$
\end{tabular}\right.\right\rbrace.
\end{eqnarray*}
We can compute $S$ in lexicographical order by first computing, for each $k \in \{-1,0,1,2\}$, the set $S_k := \{[\min(a+k-f,0),\max(0,b+k-f)] \mid [a,b] \in \errorRangeSet{\tree_u}, -\eps < a+k-f, b+k-f < \eps\}$ in lexicographical order from $\errorRangeSet{\tree_u}$, and then merging the sets $S_{-1}, S_0, S_1$ and $S_2$ into one lexicographically ordered set~$S$. 

To obtain $\errorRangeSet{\tree'_u}$ from $S$, all that remains to do is to filter out the root error ranges that are not locally optimal. To do so, we start with an empty stack and scan $S$ in lexicographical order. When we scan an element $[a,b]$ of $S$, we pop elements from the stack until the stack is empty or until the top element $[a',b']$ satisfies $b' < b$; then, if $a' \neq a$, we push $[a,b]$ onto the stack. After all elements of $S$ have been scanned, the stack contains $\errorRangeSet{\tree'_u}$. 

Procedure \FuncSty{Filter} describes the filtering algorithm in pseudocode; Procedure \FuncSty{ComputeErrorRangeSetWithParentEdge} gives the full algorithm to compute $\errorRangeSet{\tree'_u}$ from $\errorRangeSet{\tree_u}$.

\begin{procedure}
\caption{Filter(${\cal E}$)}
\label{alg:filter}
\Input{An set ${\cal E}$ of root error ranges in lexicographical order.}
\Output{Maximal subset ${\cal E}'$ of ${\cal E}$ such that no element of ${\cal E}'$ is contained in another.}
${\cal E}' \gets$ list with sentinel element $[-\infty,-\infty]$\;
\ForEach{$[a,b]$ in ${\cal E}$}{
  $[a',b'] \gets$ last element of ${\cal E}'$\;
  \lWhile{$b' \geq b$}{Remove last element from ${\cal E}'$; $[a',b'] \gets$ last element of ${\cal E}'$}
  \lIf{$a' \neq a$}{Append $[a,b]$ to ${\cal E}'$}
}
\Return ${\cal E}'$ without the first element (sentinel)
\end{procedure}

\begin{procedure}
\caption{ComputeErrorRangeSetWithParentEdge($\errorRangeSet{\tree_u}$)}
\label{alg:addparentedge}
\Input{$\errorRangeSet{\tree_u}$ in ascending order, where $\tree_u$ has root $u$.}
\Output{$\errorRangeSet{\tree'_u}$ in ascending order, where $\tree'_u = \tree_u \cup \{\{p(u),u\}\}$ with root $p(u)$.}
\BlankLine
$f \gets \weights(\{p(u),u\}) - \lfloor\weights(\{p(u),u\})\rfloor$\;
\For{$k \gets -1$ \KwTo $2$}{
  $S_k \gets$ empty list\;
   \ForEach{$[a,b]$ in $\errorRangeSet{\tree_u}$}{
    \If{$-\eps < a+k-f$ and $b+k-f<\eps$}{Append $[\min(a+k-f,0),\max(0,b+k-f)]$ to $S_k$}
  }
}
Merge $S_{-1},S_0,S_1$ and $S_2$ into lexicographically ordered list $S$\;
\Return $\FuncSty{Filter}(S)$\;
\end{procedure}

Clearly, the filtering algorithm runs in linear time: for each element $[a,b]$ of~$S$ we spend time proportional to the number of pushes (appends) and pops (removes), and each element is pushed at most once and popped at most once. The correctness follows from two observations. First, we maintain the invariant that the root error ranges on the stack, from bottom to top, are sorted in ascending order by lower bound \emph{and} by upper bound (so that none of these intervals contains another). This invariant is maintained because we only push $[a,b]$ onto the stack when the stack is empty, or when the top $[a',b']$ of the stack satisfies both $a' < a$ and $b' < b$. Second, we only discard root error ranges that are not locally optimal. To see this, observe that an element $[a',b']$ is removed from the stack only when we scan an element $[a,b]$ with $a' \leq a$ (because $[a,b]$ follows $[a',b']$ in the scanning order) and $b \leq b'$ (otherwise we would stop popping). This implies $[a,b] \in [a',b']$. If $[a,b] = [a',b']$, we will now pop $[a',b']$ but push $[a,b]$ onto the stack again and the stack does not actually change. Otherwise, $[a,b]$ is contained in and smaller than $[a',b']$, so $[a',b']$ does not correspond to a locally optimal rounding and is rightfully removed from the stack. We only refrain from pushing $[a,b]$ when the top of the stack satisfies $a = a'$ and $b' < b$, so that $[a',b']$ is contained in and smaller than $[a,b]$ and therefore, the root error range $[a,b]$ is not locally optimal.
\end{proof}

Given two trees $T'_1$ and $T'_2$ that have the same root vertex $r$, but are otherwise disjoint as in \autoref{figure:TreeEx}, we denote by $T'_1 \cup T'_2$ the union of the two trees; $T'_1 \cup T'_2$ also has root~$r$.

\begin{lemma}[merging error range sets]\label{lem:merge}
Given \errorRangeSet{\tree'_1} and \errorRangeSet{\tree'_2} for two trees $\tree'_1$ and $\tree'_2$, whose root $r$ is the only vertex that they have in common, we can compute \errorRangeSet{\tree'_1 \cup \tree'_2} in $O(|\tree'_1| + |\tree'_2|)$ time.
\end{lemma}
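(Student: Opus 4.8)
The plan is to reduce the merge to a purely combinatorial operation on the two error range sets and then to perform that operation without enumerating all pairs.

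First I would establish a structural characterization of the \roundings of $\tree'_1 \cup \tree'_2$. Since $\tree'_1$ and $\tree'_2$ share only the root $r$ and have disjoint edge sets, a rounding $\roundedWeights$ of the union is exactly an independent choice of a rounding $\roundedWeights_1$ on $\tree'_1$ and a rounding $\roundedWeights_2$ on $\tree'_2$. On a tree the path between any two vertices is unique, so I classify these paths: a path with both endpoints in $\tree'_1$ stays inside $\tree'_1$, likewise for $\tree'_2$, and a path from a vertex $u \in \tree'_1$ to a vertex $v \in \tree'_2$ must pass through $r$, so its rounding error equals $\pathError{r,u}+\pathError{r,v}$. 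Consequently, writing $[a_1,b_1]=\errorRange{\tree'_1,\roundedWeights_1}$ and $[a_2,b_2]=\errorRange{\tree'_2,\roundedWeights_2}$, the combined rounding is an \rounding if and only if $\roundedWeights_1$ and $\roundedWeights_2$ are \roundings and, in addition, the two cross-path inequalities $a_1+a_2 > -\eps$ and $b_1+b_2 < \eps$ hold (the extreme cross errors $a_1+a_2$ and $b_1+b_2$ are attained, since the root error ranges are tight). Its root error range is the smallest interval enclosing both, namely $[\min(a_1,a_2),\max(b_1,b_2)]$.

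Next I would pass from roundings to error range sets. The key observation is that if the union rounding $\roundedWeights$ is locally optimal, then its two restrictions may be assumed locally optimal as well: replacing $\roundedWeights_1$ by a locally optimal rounding of $\tree'_1$ whose root error range is contained in $[a_1,b_1]$ keeps both cross-inequalities satisfied (they are monotone in $a_1$ and $b_1$) and can only shrink the combined range, so by local optimality of $\roundedWeights$ the combined range is unchanged; the same argument applies to $\roundedWeights_2$. Hence every range in $\errorRangeSet{\tree'_1 \cup \tree'_2}$ is the smallest interval enclosing a pair $[a_1,b_1]\in\errorRangeSet{\tree'_1}$, $[a_2,b_2]\in\errorRangeSet{\tree'_2}$ that obeys the two inequalities, and conversely any such valid pair is realized by a combined rounding. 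Therefore $\errorRangeSet{\tree'_1 \cup \tree'_2}$ equals the set of minimal (under containment) elements of $\{\,[\min(a_1,a_2),\max(b_1,b_2)] : [a_1,b_1]\in\errorRangeSet{\tree'_1},\,[a_2,b_2]\in\errorRangeSet{\tree'_2},\,a_1+a_2>-\eps,\,b_1+b_2<\eps\,\}$.

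The main obstacle is computing these minimal elements in linear time, since there are $\Theta(|\errorRangeSet{\tree'_1}|\cdot|\errorRangeSet{\tree'_2}|)$ pairs in general. I would exploit that both error range sets are stored as ascending staircases (antichains, so lower and upper bounds increase together). For a minimal combined range whose lower bound is contributed by some $[a_1,b_1]\in\errorRangeSet{\tree'_1}$, the best partner is the one minimizing the enclosing upper bound $\max(b_1,b_2)$ among the partners $[a_2,b_2]$ with $a_2 \geq a_1$ that also satisfy $a_1+a_2 > -\eps$ and $b_1+b_2 < \eps$; since each of these constraints induces a monotone threshold on the partner index as we advance through $\errorRangeSet{\tree'_1}$, a single forward two-pointer sweep finds the best partner for every element, and a symmetric sweep handles the ranges whose lower bound comes from $\errorRangeSet{\tree'_2}$. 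This yields $O(|\errorRangeSet{\tree'_1}|+|\errorRangeSet{\tree'_2}|)$ candidate ranges already sorted by lower bound, which I then feed through \FuncSty{Filter} (Procedure~\ref{alg:filter}) to discard the non-minimal ones. Because $|\errorRangeSet{\tree'_1}|=O(|\tree'_1|)$ and $|\errorRangeSet{\tree'_2}|=O(|\tree'_2|)$, the whole merge runs in $O(|\tree'_1|+|\tree'_2|)$ time. The two points I expect to require the most care are the coupling of the two validity inequalities with the $\min$/$\max$ enclosing operation, which is what makes the monotone two-pointer sweep non-obvious, and the claim that restricting attention to locally optimal partners from each subtree discards no minimal combined range.
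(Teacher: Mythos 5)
Your proposal is correct and follows essentially the same route as the paper: the same characterization of combined roundings via the cross-path conditions $a_1+a_2 > -\eps$ and $b_1+b_2 < \eps$, the same split by which subtree contributes the lower bound of the combined range (the paper's ``type 1''/``type 2''), the same monotone-pointer sweep to find, for each range of one set, the earliest valid partner in the other, and the same final merge-and-filter step. The one place you go beyond the paper is in explicitly justifying that restrictions of a locally optimal combined rounding may be assumed locally optimal in each subtree, a step the paper takes for granted.
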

\begin{proof}
Consider a rounding $\roundedWeights$ of $\tree'_1 \cup \tree'_2$ that consists of an \rounding of $\tree'_1$ with root error range $[a_1,b_1]$ and an \rounding of $\tree'_2$ with root error range $[a_2,b_2]$. For $i \in \{1,2\}$, let $u_i$ and $v_i$ be vertices in $\tree'_i$ that determine the lower and upper bounds of the root error range $[a_i,b_i]$, that is: $a_i = \pathError{r,u_i}$ and $b_i = \pathError{r,v_i}$. The path composed of $\shortestPath{u_1,r}$ and $\shortestPath{r,u_2}$ is a path in $\tree'_1 \cup \tree'_2$ with $\pathError{u_1,u_2} = a_1 + a_2$; similarly, we have $\pathError{v_1,v_2} = b_1 + b_2$. It follows that $\roundedWeights$ can be an \rounding only if $a_1 + a_2 > -\eps$ and $b_1 + b_2 < \eps$. These conditions are also sufficient, since any other path from a vertex $w_1 \in \tree'_1$ to a vertex $w_2 \in \tree'_2$ consists of a path with error $\pathError{r,w_1} \in [a_1,b_1]$ and a path with error $\pathError{r,w_2} \in [a_2,b_2]$, so the total error is within $[a_1+a_2, b_1+b_2]$.

Since $\tree'_1$, $\tree'_2$ and $\tree'_1 \cup \tree'_2$ have the same root, the root error range \errorRange{\tree'_1 \cup \tree'_2, \roundedWeights} is the union of the root error ranges of $\tree'_1$ and $\tree'_2$, that is, $\errorRange{\tree'_1 \cup \tree'_2, \roundedWeights} = [\min(a_1,a_2), \max(b_1,b_2)]$. We say $\roundedWeights$ is of type 1 if $a_1 < a_2$, and of type 2 if $a_2 \leq a_1$. 

We will now explain how to find a linear-size set $S_1$ of root error ranges for $\tree'_1 \cup \tree'_2$ that includes the root error ranges of all locally optimal roundings of type 1. Recall that such a root error range for $\tree'_1 \cup \tree'_2$ must stem from a root error range $[a_1,b_1] \in \errorRangeSet{\tree'_1}$ and a root error range $[a_2,b_2] \in \errorRangeSet{\tree'_2}$ that satisfy the following conditions: (i) $a_2 > a_1$ (condition for type 1); (ii) $a_2 > -\eps-a_1$; (iii) $b_2 < \eps - b_1$. The idea of the algorithm is to scan the root error ranges $[a_1,b_1] \in \errorRangeSet{\tree'_1}$ in ascending order while maintaining pointers to: the first range $[a'_2,b'_2] \in \errorRangeSet{\tree'_2}$ that satisfies condition (i); the first range $[a''_2,b''_2] \in \errorRangeSet{\tree'_2}$ that satisfies condition (ii); and the last range $[a'''_2,b'''_2] \in \errorRangeSet{\tree'_2}$ that satisfies condition (iii). (Consider $\errorRangeSet{\tree'_2}$ augmented with sentinel ranges $[-\eps,-\eps]$ and $[\eps,\eps]$.) Note that as $a_1$ and $b_1$ increase, the first pointer ascends in $\errorRangeSet{\tree'_2}$, whereas the second and the third pointer descend in $\errorRangeSet{\tree'_2}$. Therefore we can scan all of $\errorRangeSet{\tree'_1}$ while maintaining the three pointers into $\errorRangeSet{\tree'_2}$ in linear time. 

For any error range $[a_1,b_1]$ scanned from $\errorRangeSet{\tree'_1}$, consider now the range $[a_2,b_2] \in \errorRangeSet{\tree'_2}$ with $a_2 = \max(a'_2, a''_2)$, that is, the range pointed to by the furthest of the first two pointers. If $b_2 \leq b'''_2$ (that is, we are not past the third pointer), we include $[\min(a_1,a_2), \max(b_1,b_2)] = [a_1, \max(b_1,b_2)]$ in $S_1$. Note that we do not need to consider a combination of $[a_1,b_1]$ with any other range $[a,b] \in \errorRangeSet{\tree'_2}$ between $[a_2,b_2]$ and $[a'''_2,b'''_2]$: the resulting error range for $\tree'_1 \cup \tree'_2$ would be $[a_1, \max(b_1,b)]$ and include $[a_1, \max(b_1,b_2)]$, so it would be either a duplicate of $[a_1, \max(b_1,b_2)]$, or it would not be locally optimal.

In a similar fashion, we can find a linear-size set $S_2$ of root error ranges for $\tree'_1 \cup \tree'_2$ that includes the root error ranges of all locally optimal roundings of type 2. Finally we can merge $S_1$ and $S_2$ and filter out error ranges that are not locally optimal with the algorithm described in the proof of Lemma~\ref{lem:addparentedge}. The full algorithm is given by Procedure \FuncSty{MergeErrorRangeSets}.
\end{proof}

\begin{procedure}
\caption{MergeErrorRangeSets($\errorRangeSet{\tree'_1}, \errorRangeSet{\tree'_1}$)}
\label{alg:merge}
\Input{$\errorRangeSet{\tree'_1}$ and $\errorRangeSet{\tree'_2}$, each in ascending order, where $\tree'_1$ and $\tree'_2$ have a common root.}
\Output{$\errorRangeSet{\tree'_1 \cup \tree'_2}$ in ascending order.}
\BlankLine
Add sentinel $[-\eps,-\eps]$ at the beginning of $\errorRangeSet{\tree'_1}$ and $\errorRangeSet{\tree'_2}$\;
Add sentinel $[\eps,\eps]$ at the end of $\errorRangeSet{\tree'_1}$ and $\errorRangeSet{\tree'_2}$\;
$S_1 \gets$ empty list\;
Let $p_1$ point to the first element of $\errorRangeSet{\tree'_2}$\;
Let $p_2$ and $p_3$ point to the last element of $\errorRangeSet{\tree'_2}$\;
\ForEach{$[a_1,b_1]$ in $\errorRangeSet{\tree'_1}$ except the sentinels}{
  \While{element $[a_2,b_2]$ pointed at by $p_1$ violates $a_2 > a_1$}{
    $p_1 \gets$ pointer to successor
  }
  \While{predecessor $[a_2,b_2]$ of el. pointed at by $p_2$ satisfies $a_2 > -\eps - a_1$}{
    $p_2 \gets$ pointer to predecessor
  }
  \While{element $[a_2,b_2]$ pointed at by $p_3$ violates $b_2 < \eps - b_1$}{
    $p_3 \gets$ pointer to predecessor
  }
  Let $[a'_2,b'_2], [a''_2, b''_2], [a'''_2,b'''_2]$ be the elements pointed at by $p_1, p_2, p_3$\;
  \lIf{$\max(b'_2, b''_2) \leq b'''_2$}{Append $[a_1,\max(b_1,b'_2,b''_2)]$ to $S_1$}  
}
Compute $S_2$ in a similar manner\;
Merge $S_1$ and $S_2$ into a lexicographically ordered list $S$\;
\Return $\FuncSty{Filter}(S)$\;
\end{procedure}

To decide whether a tree $\tree$ admits an \rounding, we compute $\errorRangeSet{\tree_u}$ for all subtrees $\tree_u$ of $\tree$ bottom-up. 

Specifically, if $u$ is a leaf, $\errorRangeSet{\tree_u} = [0,0]$. If $u$ is an internal vertex with a single child $v$, then $\tree_u = \tree'_v$ and we compute $\errorRangeSet{\tree_u} = \errorRangeSet{\tree'_v}$ from $\errorRangeSet{\tree_v}$ with the algorithm of Lemma~\ref{lem:addparentedge}. If $u$ is an internal vertex with two children $v$ and $w$, we first compute $\errorRangeSet{\tree'_v}$ and $\errorRangeSet{\tree'_w}$ from $\errorRangeSet{\tree_v}$ and $\errorRangeSet{\tree_w}$, respectively, with the algorithm of Lemma~\ref{lem:addparentedge}, and then we compute $\errorRangeSet{\tree_u} = \errorRangeSet{\tree'_v \cup \tree'_w}$ with the algorithm of Lemma~\ref{lem:merge}. Finally, if $u$ is an internal vertex with more than two children, we first compute $\errorRangeSet{\tree'_v}$ from $\errorRangeSet{\tree_v}$ for each child $v$. Then we organize all children in a balanced binary merge tree $M$ with the children of $u$ at the leaves \HH{here we could consider a figure too}; for a vertex $x$ in $M$, let $C(x)$ be the children of $u$ in the subtree of $M$ rooted at~$x$. With vertex $x$ we associate the error range set $\errorRangeSet{\bigcup_{v \in C(x)} \tree'_v}$. We process the merge tree $M$ bottom-up, using the algorithm of Lemma~\ref{lem:merge} for each internal vertex $x$ of $M$ to compute $\errorRangeSet{\bigcup_{v \in C(x)} \tree'_v}$ from the error range sets associated with the children of $x$. The error range set computed for the root of $M$ constitutes $\errorRangeSet{\tree_u}$.

Ultimately, we compute $\errorRangeSet{\tree_{\rootOf{\tree}}}$. If and only if this error range set is non-empty, $\tree$ admits an \rounding.

We say the \emph{effective} height of the tree $\tree$ is the height it would have when all internal vertices with more than two children were replaced by their binary merge trees. The algorithms of Lemmas \ref{lem:addparentedge} and \ref{lem:merge} take time linear in the size of the subtrees that are being processed. Thus, if $\tree$ has $n$ vertices and effective height $h$, the above algorithm to compute $\errorRangeSet{\tree_{\rootOf{\tree}}}$ runs in $O(nh)$ time. This proves:
\begin{theorem}
Given an edge-weighted tree \tree of $n$ vertices and an error tolerance~$\eps$, 
one can decide in $O(n^2)$ time whether \tree admits an \rounding.
\end{theorem}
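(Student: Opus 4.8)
The plan is to derive the theorem directly from the bottom-up computation described above, treating \autoref{lem:addparentedge} and \autoref{lem:merge} together with the error range set size lemma as black boxes, and simply assembling them. Two things remain to be checked: that the computation actually decides the existence of an \rounding, and that it runs in $O(n^2)$ time. I would first dispatch the trivial regime: if $\eps \ge 2$ the answer is always affirmative, since every tree admits a $2$-rounding (shown earlier) and such a rounding is in particular an \rounding; so I would assume $\eps < 2$, which is exactly the regime in which the three lemmas operate.

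For correctness I would proceed by structural induction over the rooted tree \tree, maintaining the invariant that for every vertex $u$ the algorithm computes exactly $\errorRangeSet{\tree_u}$, together with the following completeness property: whenever $\tree_u$ admits an \rounding whose root error range is $R$, the set $\errorRangeSet{\tree_u}$ contains some range contained in $R$. The base case is a leaf, for which the unique rounding has root error range $[0,0]$. In the inductive step, a vertex with a single child is handled by \autoref{lem:addparentedge}, a vertex with two children by \autoref{lem:addparentedge} followed by \autoref{lem:merge}, and a vertex with more children by repeated application of \autoref{lem:merge} along a balanced binary merge tree; associativity of the merge operation makes the shape of this merge tree irrelevant to the output. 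The invariant is preserved because, \tree being a tree, every path in $\tree_u$ either stays within one of the combined parts---where the error is bounded by the induction hypothesis---or crosses $u$, in which case its error is controlled by the feasibility conditions of the two lemmas. Applying this at the root yields $\errorRangeSet{\tree_{\rootOf{\tree}}}$, and I would conclude that \tree admits an \rounding if and only if this set is nonempty.

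For the running time I would charge each application of \autoref{lem:addparentedge} or \autoref{lem:merge} at a node $x$ of the effective tree to the vertices of \tree in the subtree rooted at $x$: by the error range set size lemma every error range set involved has size linear in its subtree, and both lemmas run in time linear in their input, so the charge is $O(1)$ per vertex per node. Summing, each vertex of \tree is charged once per ancestor in the effective tree, hence at most $h$ times, giving total time $O(nh)$; since the effective height satisfies $h \le n$, this is $O(n^2)$. The one step I expect to require genuine care is the completeness direction of the invariant---that restricting each subtree to locally optimal roundings never discards a globally feasible rounding. This hinges on the observation that replacing a subtree's rounding by a locally optimal one with a contained root error range can only relax the feasibility conditions at every ancestor, so the exchange can be carried out consistently from the leaves upward; everything else is a routine assembly of the lemmas and a standard charging argument for the time bound.
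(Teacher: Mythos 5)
Your proposal is correct and takes essentially the same route as the paper: the same bottom-up computation of error range sets via Lemmas~\ref{lem:addparentedge} and~\ref{lem:merge}, with balanced binary merge trees at vertices of high degree, the same non-emptiness test at the root, and the same charging argument yielding $O(nh) = O(n^2)$ time. Your explicit dispatch of the case $\eps \geq 2$ via the linear-time 2-rounding, and your exchange argument justifying that restricting to locally optimal roundings loses no feasible rounding, merely make explicit points the paper leaves implicit in the statements of its lemmas.
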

To find the minimal maximum rounding error, we first compute the lengths of all $O(n^2)$ simple paths in the tree in $O(n^2)$ time. We can do so with a bottom-up algorithm that computes for each vertex $u$ the lengths of all paths in $\tree_u$, and passes on the lengths of all paths in $\tree_u$ that end in $u$ to the parent of $u$. Each path produces up to four candidate values for the maximum rounding error, namely, for $k \in \{-1,0,1,2\}$, the absolute value of ($k$ minus the fractional part of the path length). We sort all of these candidate values in $O(n^2 \log n)$ time. Finally we find the smallest error tolerance for which the decision algorithm says yes by binary search, using $O(\log n)$ calls to the decision algorithm, which takes $O(n^2 \log n)$ time in total. 

\begin{corollary}
Given an edge-weighted tree $\tree$ of $n$ vertices, we can compute a rounding of $\tree$ that minimizes the maximum absolute rounding error on any simple path in the tree in $O(n^2 \log n)$ time.
\end{corollary}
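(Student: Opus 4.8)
The plan is to reduce the optimization problem to $O(\log n)$ calls of the decision algorithm of the preceding theorem. Since \tree is a tree, there is a unique path between any two vertices, so the quantity to minimize is $M(\roundedWeights) := \max_{u,v}\,\lvert \roundedWeights(\shortestPath{u,v}) - \weights(\shortestPath{u,v})\rvert$, a maximum over the $\binom{n}{2}$ vertex pairs, and I write $\mu := \min_{\roundedWeights} M(\roundedWeights)$ for its optimum. I want to compute $\mu$ together with a rounding that attains it.

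The crucial observation is that $\mu$ lies in a small, explicitly computable candidate set. Because \tree always admits a $2$-rounding, it suffices to consider roundings with $\eps < 2$; for such a rounding the rounded weight of a path $\shortestPath{u,v}$ of fractional length $\phi_{u,v}$ equals $\lfloor\weights(\shortestPath{u,v})\rfloor + k$ for some $k \in \{-1,0,1,2\}$, so its absolute error is one of the four values $\lvert k - \phi_{u,v}\rvert$. Consequently every achievable value $M(\roundedWeights)$ is a maximum of such terms and is therefore itself one of these $\le 4\binom{n}{2} = O(n^2)$ candidates; in particular, $\mu$ is a candidate. To list all candidates I would first compute the lengths of all $\binom{n}{2}$ paths by a bottom-up traversal: at each vertex $u$ I keep the lengths of all paths inside $\tree_u$ that end at $u$; merging the lists handed up by the children of $u$ yields the lengths of all paths whose topmost vertex is $u$, and extending and passing these lists to the parent covers the remaining paths. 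Each path is generated exactly once and the total size of all lists is $O(n^2)$, so all path lengths, and hence all candidate values, are obtained in $O(n^2)$ time.

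Having sorted the $O(n^2)$ candidates in $O(n^2\log n)$ time, I use that \tree admits an \eps-rounding if and only if $\mu < \eps$; hence the decision algorithm answers \emph{yes} for a threshold $\eps$ precisely when $\eps > \mu$. Since $\mu$ is itself a candidate, as we scan the sorted candidates the answer flips from \emph{no} to \emph{yes} exactly after the candidate equal to $\mu$. A binary search over the candidates therefore pinpoints $\mu$ using $O(\log n)$ evaluations of the $O(n^2)$-time decision algorithm, i.e. $O(n^2\log n)$ time in total.

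It remains to produce an optimal rounding, not merely the value $\mu$. Let $c'$ be the smallest candidate strictly greater than $\mu$. Running the constructive version of the decision algorithm with threshold $c'$ returns a rounding \roundedWeights with $M(\roundedWeights) < c'$; as the achievable $M$-values are candidates and none lies strictly between $\mu$ and $c'$, this forces $M(\roundedWeights) = \mu$, so \roundedWeights is optimal. The step that needs the most care is this construction, which I expect to be the main obstacle: the decision procedure only computes the error range sets \errorRangeSet{\tree_u} bottom-up, so to recover concrete edge weights one has to record, during the computations of Lemmas~\ref{lem:addparentedge} and~\ref{lem:merge}, which shift $k$ and which pair of combined root error ranges realize each surviving range, and then make a single top-down pass fixing the edges so that the attained root error range stays within $(-c',c')$. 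This bookkeeping costs only linear time per level of recursion and does not change the overall $O(n^2\log n)$ bound.
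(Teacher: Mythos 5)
Your proposal matches the paper's proof essentially step for step: the same bottom-up computation of the lengths of all $O(n^2)$ simple paths, the same four candidate error values per path (the values $|k - \phi|$ for $k \in \{-1,0,1,2\}$ and $\phi$ the fractional part of the path length), sorting the candidates in $O(n^2 \log n)$ time, and a binary search over them using $O(\log n)$ calls to the quadratic-time decision algorithm. The additional care you take with the strict inequality in the definition of an \rounding and with extracting an actual optimal rounding from the constructive decision algorithm only makes explicit what the paper leaves implicit.
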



\section{Conclusions and comparison to related work}

We have shown that it is, in general, NP-hard to decide whether a path-oblivious, weak, or strong \rounding exists for a given graph, but the problem can be solved in polynomial time if the graph is a tree.
Does this mean there is no hope of finding efficient algorithms to round weights in practical graphs other than trees? The conditions of our NP-hardness construction raise several questions.

First, note that we motivated the study of the \rounding problem with applications of graphs that are, typically, almost planar---not just trees, and not at all like the graph in our NP-hardness proof where most pairs of vertices are connected by a direct edge. To deal with realistic graphs in data structures for shortest-path queries, Storandt~\cite{Storandt2018} proposes to augment graphs with additional edges that can represent paths of many edges in the original graphs without suffering from accumulated rounding errors. However, for the map drawing application that we mentioned in the introduction, such an approach would not be suitable---we must really keep to rounding the weights of the original edges. Where then, between trees and almost-complete graphs, lies the boundary between easy and hard? So far, we have been unable to reduce \threeSAT, even planar \threeSAT, to a planar instance of an \rounding problem. It might be that in a (near-)planar graph, the dependencies between shortest paths between different pairs of vertices are so strong that the problem becomes easy to decide---as is the case with trees. However, we do not see how to adapt our algorithm for trees to planar graphs. Possibly, a first step in that direction would be to develop an efficient algorithm for graphs that consist of a single cycle, or trees attached to a single cycle, that is, a tree with one additional edge. To improve our understanding of the structure of the problem, we may also try to get a subquadratic algorithm for trees.

Second, we observe that the \NP-hardness construction as described works as long as $7/8 < \eps \leq 1$.
However, we do not know how to construct a working variable gadget for $\eps > 1$.
Does the problem remain NP-hard for values of $\eps$ (slightly) larger than 1, or does it become easy to solve in that case? \HH{I believe in the IWOCA version we also pose the question for $\eps < 7/8$. In fact, I believe the current proof can be adapted to $3/4 < \eps \leq 1$, and $\eps \leq 1/2$ is trivial: each edge can only be rounded in one way (if at all) and then you just need to check in polynomial time whether the solution is ok with respect to all shortest paths. So the open questions are really: $1/2 < \eps \leq 3/4$ and $\eps > 1$.}

Third, we might reconsider the exact conditions of an \rounding. Funke and Storandt~\cite{FunkeStorandt2016} study the rounding problem with different conditions. Whether this reduces the theoretical complexity of the problem as compared to the conditions in our paper is not clear. Funke and Storandt observe that many rounding problems are NP-hard, but do not prove this specifically for shortest-path-preserving rounding problems. They describe an ILP-formulation that they find to be too expensive to solve even for small graphs, and then describe and evaluate a greedy rounding heuristic. 


The essential difference in conditions is that Funke and Storandt consider relative errors rather than absolute errors: if $x$ is the distance between $u$ and $v$ before rounding, and $\tilde{x}$ is the distance between $u$ and $v$ after rounding, then the rounding error would be $\max(x/\tilde{x}, \tilde{x}/x)$ (where 1 means: no error). This might make the problem easier, because of the following propery of relative errors (which Funke and Storandt exploit in their heuristics): if the relative change of weight on any subpath of a path is bounded, then the relative change of weight on the complete path automatically adheres to the same bound. 

To prevent the rounding errors on short paths from dominating the result, Funke and Storandt introduce an input parameter $k$. They only require the relative rounding error to be at most $1 + \eps$ for distances between vertices that are at least $k$ edges apart along the shortest path. For individual edges, they require an absolute rounding error less than~1. Note that this implies that the rounding error that is allowed on a path from $u$ to $v$ depends on how the path is subdivided into edges: if the path consists of $m$ edges where $m < k$, an \emph{absolute} rounding error close to $m$ may be accepted; if $m > k$, the \emph{relative} rounding error is bounded to $1 + \eps$. To remove the dependency on $k$ and on additional vertices along a path, we propose the following alternative: for any shortest path~$\pi$ (regardless of the number of edges it consists of), Condition~\ref{condition:weights} of an \rounding becomes: $\min(\weights(\pi)/(1+\eps), \weights(\pi)-1) < \roundedWeights(\pi) < \max(\weights(\pi)\cdot(1+\eps),\weights(\pi)+1)$. In words: any shortest path should adhere to the relative error bound $1+\eps$ or to the absolute error bound~$1$. How would this affect our computational complexity bounds?

Finally, we note that various authors have studied roundings in the following setting. The input is a hypergraph $H$---to distinguish it from the graphs in our paper, we will call the vertices of $H$ \emph{hypervertices} and its edges \emph{hyperedges}. The hypervertices have real weights in $[0,1]$. Each hyperedge is a set of at least two hypervertices; its weight is the sum of the weights of its hypervertices. The goal is to find a \emph{global rounding}, that is, replace the hypervertex weights by integers such that the change of weight on each hypervertex and each hyperedge is less than one. Note that our path-oblivious \rounding[1] problem can be formulated in these terms. 

Asano et al.~\cite{AsanoEtAl2000a,AsanoEtAl2000b} proved that finding a global rounding for a hypergraph is NP-hard if the hypervertices represent cells of a square grid and the hyperedges represent squares of $2\times2$ cells. Later, Asano, with different co-authors~\cite{AsanoEtAl2004} studied the following case: the hypervertices represent the $n$ vertices of a graph $\graph$ with weights on edges and vertices, and the hyperedges represent all shortest paths in~\graph with respect to the edge weights. They conjectured that in this case, at most $n+1$ global roundings of the (hyper-)vertex weights are possible. This was proven for path-shaped graphs~\cite{SadakaneEtAl2005}, and later also for outerplanar graphs~$\graph$~\cite{TakkiChebihiTokuyama2003}. However, to establish a relation to our (path-oblivious, weak, or strong) \rounding problem, one would have to reduce the square grid rounding problem to our problem, or our problem of finding a rounding of the edge weights to the problem of finding a rounding of the (hyper-)vertex weights. Currently we do not see how to do this (except if $G$ is a simple path), but it might be worthwile to investigate this further.

\small
\bibliographystyle{abbrv}

\end{document}